\providecommand{\U}[1]{\protect\rule{.1in}{.1in}}
\newtheorem{theorem}{Theorem}
\newtheorem{lemma}[theorem]{Lemma}
\newtheorem{proposition}[theorem]{Proposition}
\newtheorem{remark}[theorem]{Remark}
\newenvironment{proof}[1][Proof]{\noindent\textbf{#1.} }{\ \rule{0.5em}{0.5em}}
\numberwithin{equation}{section}
\begin{document}

\title{Multipartite quantum correlations and local recoverability}
\author{Mark M. Wilde\thanks{Hearne Institute for Theoretical Physics, Center for
Computation and Technology, Department of Physics and Astronomy, Louisiana
State University, Baton Rouge, Louisiana 70803, USA}}
\maketitle

\begin{abstract}
Characterizing genuine multipartite quantum correlations in quantum physical
systems has historically been a challenging problem in quantum information
theory. More recently however, the \textit{total correlation} or
\textit{multipartite information} measure has been helpful in accomplishing
this goal, especially with the multipartite symmetric quantum (MSQ) discord
[Piani \textit{et al.}, Phys.~Rev.~Lett.~\textbf{100}, 090502, 2008] and the
conditional entanglement of multipartite information (CEMI) [Yang \textit{et
al.}, Phys.~Rev.~Lett.~\textbf{101}, 140501, 2008]. Here we apply a recent and
significant improvement of strong subadditivity of quantum entropy [Fawzi and
Renner, arXiv:1410.0664] in order to develop these quantities further. In
particular, we prove that the MSQ discord is nearly equal to zero if and only
if the multipartite state for which it is evaluated is approximately locally
recoverable after performing measurements on each of its systems. Furthermore,
we prove that the CEMI is a faithful entanglement measure, i.e., it vanishes
if and only if the multipartite state for which it is evaluated is a fully
separable state. Along the way we provide an operational interpretation of the
MSQ discord in terms of the partial state distribution protocol, which in
turn, as a special case, gives an interpretation for the original discord
quantity. Finally, we prove an inequality that could potentially improve upon
the Fawzi-Renner inequality in the multipartite context, but it remains an
open question to determine whether this is so.

\end{abstract}

\section{Introduction}

The quantification and characterization of correlations in multiple physical
systems has a long history, with some of the first proposals for information
measures being the works of McGill \cite{M54} and Watanabe \cite{W60}. Of
particular interest for us here is the \textit{total correlation} measure
proposed by Watanabe \cite{W60}, which is defined for a set of random
variables $X_{1}$, \ldots, $X_{l}$ as the sum of the individual entropies less
the joint entropy:%
\begin{equation}
I\left(  X_{1}:\cdots:X_{l}\right)  \equiv H\left(  X_{1}\right)
+\cdots+H\left(  X_{l}\right)  -H\left(  X_{1}\cdots X_{l}\right)  ,
\end{equation}
where $H\left(  \cdot\right)  $ is the Shannon entropy. The total correlation
has the salient properties of being non-negative and monotone non-increasing
under local operations, meaning that it does not increase under the local
discarding of information, i.e., for random variables $X_{1}$, $X_{1}^{\prime
}$, \ldots, $X_{l}$, $X_{l}^{\prime}$, the following inequality holds%
\begin{equation}
I\left(  X_{1}X_{1}^{\prime}:\cdots:X_{l}X_{l}^{\prime}\right)  \geq I\left(
X_{1}^{\prime}:\cdots:X_{l}^{\prime}\right)  .
\end{equation}

The generalization of the total correlation to quantum physical systems is
straightforward, given simply by replacing Shannon entropies with von Neumann
entropies \cite{H94}. In the quantum information theory literature, the
quantity is known as the \textit{multipartite information}. Specifically, let
$\rho_{A_{1}A_{2}\cdots A_{l}}$ be a multipartite density operator
representing the state of systems $A_{1}$, \ldots, $A_{l}$ (i.e., $\rho
_{A_{1}A_{2}\cdots A_{l}}$ is a trace one, positive semidefinite operator
acting on the tensor-product Hilbert space $\mathcal{H}_{A_{1}}\otimes
\cdots\otimes\mathcal{H}_{A_{l}}$). The multipartite information of this state
is defined as%
\begin{equation}
I\left(  A_{1}:\cdots:A_{l}\right)  _{\rho}\equiv H\left(  A_{1}\right)
_{\rho}+\cdots+H\left(  A_{l}\right)  _{\rho}-H\left(  A_{1}\cdots
A_{l}\right)  _{\rho}, \label{eq:multi-info}%
\end{equation}
with the von Neumann entropy of a density operator $\sigma$ on system $S$
defined in terms of the natural logarithm as $H\left(  S\right)  _{\sigma
}\equiv H\left(  \sigma\right)  \equiv-$Tr$\left\{  \sigma\log\sigma\right\}
$ and the marginal entropies $H\left(  A_{i}\right)  _{\rho}$ are defined with
respect to the reduced density operator%
\begin{equation}
\rho_{A_{i}}=\text{Tr}_{A_{1}\cdots A_{l}\backslash A_{i}}\left\{  \rho
_{A_{1}\cdots A_{l}}\right\}  .
\end{equation}
The quantity in (\ref{eq:multi-info}) is also non-negative and monotone
non-increasing under the local discarding of information, i.e., the following
inequality holds for a multipartite density operator $\rho_{A_{1}A_{1}%
^{\prime}\cdots A_{l}A_{l}^{\prime}}$:%
\begin{equation}
I\left(  A_{1}A_{1}^{\prime}:\cdots:A_{l}A_{l}^{\prime}\right)  _{\rho}\geq
I\left(  A_{1}^{\prime}:\cdots:A_{l}^{\prime}\right)  _{\rho}.
\label{eq:multi-monotone}%
\end{equation}
The above inequality follows because the multipartite information can be
written in terms of the relative entropy $D\left(  \rho\Vert\sigma\right)
\equiv\ $Tr$\left\{  \rho\left[  \log\rho-\log\sigma\right]  \right\}  $
\cite{U62}\ as%
\begin{equation}
I\left(  A_{1}:\cdots:A_{l}\right)  _{\rho}=D\left(  \rho_{A_{1}\cdots A_{l}%
}\Vert\rho_{A_{1}}\otimes\cdots\otimes\rho_{A_{l}}\right)  ,
\end{equation}
and the relative entropy is monotone non-increasing under quantum operations
\cite{U77}, i.e., $D\left(  \rho\Vert\sigma\right)  \geq D\left(
\mathcal{N}\left(  \rho\right)  \Vert\mathcal{N}\left(  \sigma\right)
\right)  $ for any states $\rho$ and $\sigma$ and quantum channel
$\mathcal{N}$ (recall that a quantum channel is a completely positive trace
preserving (CPTP)\ linear map).

Given the inequality in (\ref{eq:multi-monotone}), we are left to wonder
whether one could refine it in a non-trivial way by finding a state-dependent
remainder term. This kind of question has been the driving force behind
several recent investigations in quantum information theory
\cite{Winterconj,K13conj,LW14a,CL14,Z14,BSW14,SBW14,SW14}, culminating in the
following breakthrough inequality of Fawzi and Renner \cite{FR14}:%
\begin{align}
I\left(  A;B|C\right)  _{\rho} &  \geq-\log F\left(  \rho_{ABC},\mathcal{R}%
_{C\rightarrow AC}\left(  \rho_{BC}\right)  \right)  \label{eq:FR}\\
&  \geq\frac{1}{4}\left\Vert \rho_{ABC}-\mathcal{R}_{C\rightarrow AC}\left(
\rho_{BC}\right)  \right\Vert _{1}^{2},
\end{align}
where $I\left(  A;B|C\right)  _{\rho}$ is the conditional quantum mutual
information of a tripartite state $\rho_{ABC}$, defined as%
\begin{equation}
I\left(  A;B|C\right)  _{\rho}\equiv H\left(  AC\right)  _{\rho}+H\left(
BC\right)  _{\rho}-H\left(  C\right)  _{\rho}-H\left(  ABC\right)  _{\rho},
\end{equation}
and $\mathcal{R}_{C\rightarrow AC}$ is a particular CPTP\ \textquotedblleft
recovery map\textquotedblright\ which acts on system $C$ alone in an attempt
to recover the \textquotedblleft lost\textquotedblright\ system$~A$. The
quantity $F\left(  \omega,\tau\right)  \equiv\left\Vert \sqrt{\omega}%
\sqrt{\tau}\right\Vert _{1}^{2}$ is the quantum fidelity between states
$\omega$ and $\tau$ \cite{U73}, with $\left\Vert A\right\Vert _{1}\equiv
\ $Tr$\{\sqrt{A^{\dag}A}\}$ the Schatten $\ell_{1}$ norm. The trace distance
between two density operators $\omega$ and $\tau$ is defined in terms of the
trace norm as $\left\Vert \omega-\tau\right\Vert _{1}$ and characterizes how
well one can distinguish the states $\omega$ and $\tau$ in any physical
experiment. The Fawzi-Renner inequality gives a state-dependent improvement to
strong subadditivity (i.e., $I\left(  A;B|C\right)  _{\rho}\geq0$)
\cite{PhysRevLett.30.434,LR73} and has even been improved upon in recent work
of Brandao \textit{et al}.~\cite{BHOS14}. One can also see the recent work
\cite{BT15}\ for a simpler proof of (\ref{eq:FR}).

The difference of the two multipartite informations in
(\ref{eq:multi-monotone}) is the basis for two distinct measures of quantum
correlations:\ the multipartite symmetric quantum (MSQ)\ discord \cite{PHH08}
and the conditional entanglement of multipartite information
(CEMI)\ \cite{YHW08}, which were inspired by the quantum discord
\cite{Z00,OZ01} and the squashed entanglement \cite{CW04}, respectively. We
briefly motivate these quantities here and give formal definitions later in
the paper. We begin by describing the MSQ\ discord. Let $A_{1}\cdots A_{l}$ be
quantum systems held by spatially separated parties and suppose that each
party measures their local system, leading to classical systems $X_{1}\cdots
X_{l}$. We could then compute the non-negative information gap $I\left(
A_{1}:\cdots:A_{l}\right)  -I\left(  X_{1}:\cdots:X_{l}\right)  $ and optimize
it over all local measurements. Suppose that the state is classical to begin
with, meaning that it can be written as%
\begin{equation}
\sum_{x_{1},\ldots,x_{l}}p\left(  x_{1},\ldots,x_{l}\right)  \left\vert
x_{1}\right\rangle \left\langle x_{1}\right\vert _{A_{1}}\otimes\cdots
\otimes\left\vert x_{l}\right\rangle \left\langle x_{l}\right\vert _{A_{l}%
},\label{eq:multi-classical}%
\end{equation}
for some joint probability distribution $p\left(  x_{1},\ldots,x_{l}\right)  $
and orthonormal bases $\{\left\vert x_{i}\right\rangle _{A_{i}}\}$ for
$i\in\left\{  1,\ldots,l\right\}  $. Then there are local measurements that do
not change the state at all after they are performed, and the MSQ\ discord is
equal to zero. If the state cannot be written as above, then it cannot be
understood in a classical way, such that there does not exist a set of local
measurements that would leave the state undisturbed. In this sense, the
MSQ\ discord is a measure of multipartite quantum correlations between the
different parties and it is known that it is a faithful measure \cite{PHH08},
meaning that it is zero if and only if the state is multipartite classical as
written in (\ref{eq:multi-classical}). Other desirable properties for a
discord-like measure are described in \cite[Section~2.1]{CBRFA14}.

The CEMI is motivated by the concept of the monogamy of quantum entanglement,
that if two or more systems are highly entangled then any other systems cannot
be too entangled with them. On the other hand, states which are close to being
unentangled are highly shareable \cite{W89a}\ or extendible \cite{DPS05}, such
that there could be many other systems sharing the same correlations with
them. So to define the CEMI, we begin with a multipartite state on the systems
$A_{1}\cdots A_{l}$ and try to find a global state on these systems and some
others $A_{1}^{\prime}\cdots A_{l}^{\prime}$ that is consistent with the
original state, meaning that we recover the original state when tracing over
$A_{1}^{\prime}\cdots A_{l}^{\prime}$. From the aforementioned ideas, any
classical correlations can be shared with the extension systems $A_{1}%
^{\prime}\cdots A_{l}^{\prime}$ while entanglement cannot be shared. The
information gap $I\left(  A_{1}A_{1}^{\prime}:\cdots:A_{l}A_{l}^{\prime
}\right)  -I\left(  A_{1}^{\prime}:\cdots:A_{l}^{\prime}\right)  $ attempts to
subtract out the multipartite classical correlations that are shareable, so
that what is left is a measure of multipartite quantum entanglement. One then
optimizes this quantity by taking an infimum over all extension states. The
work of \cite{YHW08} fully justified this approach, proving that the CEMI\ is
a proper entanglement measure and bears many properties which are desirable
for such a measure. What was left open was to prove that the CEMI\ is a
faithful entanglement measure, meaning that it is equal to zero if and only if
the state on $A_{1}\cdots A_{l}$ is a fully separable (unentangled) state
\cite{W89}\ of the following form:%
\begin{equation}
\sum_{z}p\left(  z\right)  \sigma_{A_{1}}^{z}\otimes\cdots\otimes\sigma
_{A_{l}}^{z},
\end{equation}
where $p\left(  z\right)  $ is a probability distribution and $\sigma_{A_{i}%
}^{z}$ is a quantum state on system $A_{i}$.

\section{Summary of results}

The Fawzi-Renner inequality in (\ref{eq:FR}) has a number of implications for
entanglement theory and more general quantum correlations:\ it gives an
alternate method \cite{Winterconj,LW14}\ from \cite{BCY11}\ for establishing
the faithfulness of the squashed entanglement measure \cite{CW04}\ and it
allows for characterizing quantum states with discord \cite{Z00,OZ01}\ nearly
equal to zero as being approximate fixed points of entanglement breaking
channels \cite[Proposition~29]{SW14}.

The main objective of the present paper is to pursue extensions of these ideas
for multipartite quantum states and correlation measures. In particular, we
first demonstrate that the following \textquotedblleft local
recoverability\textquotedblright\ inequality is a consequence of the
inequality in (\ref{eq:FR}):%
\begin{multline}
I\left(  A_{1}A_{1}^{\prime}:\cdots:A_{l}A_{l}^{\prime}\right)  _{\rho
}-I\left(  A_{1}^{\prime}:\cdots:A_{l}^{\prime}\right)  _{\rho}%
\label{eq:local-recover}\\
\geq\left[  \frac{1}{2l}\left\Vert \rho_{A_{1}A_{1}^{\prime}\cdots A_{l}%
A_{l}^{\prime}}-\left(  \mathcal{R}_{A_{1}^{\prime}\rightarrow A_{1}%
A_{1}^{\prime}}^{1}\otimes\cdots\otimes\mathcal{R}_{A_{l}^{\prime}\rightarrow
A_{l}A_{l}^{\prime}}^{l}\right)  \left(  \rho_{A_{1}^{\prime}\cdots
A_{l}^{\prime}}\right)  \right\Vert _{1}\right]  ^{2},
\end{multline}
where $\mathcal{R}_{A_{1}^{\prime}\rightarrow A_{1}A_{1}^{\prime}}^{1}$,
\ldots, $\mathcal{R}_{A_{l}^{\prime}\rightarrow A_{l}A_{l}^{\prime}}^{l}$ are
local recovery maps. The implication of the above inequality is that if the
gap $I\left(  A_{1}A_{1}^{\prime}:\cdots:A_{l}A_{l}^{\prime}\right)  _{\rho
}-I\left(  A_{1}^{\prime}:\cdots:A_{l}^{\prime}\right)  _{\rho}$ is nearly
equal to zero, then the full state $\rho_{A_{1}A_{1}^{\prime}\cdots A_{l}%
A_{l}^{\prime}}$ is \textquotedblleft locally recoverable,\textquotedblright%
\ i.e., one can approximately recover it by performing the local recovery maps
$\mathcal{R}_{A_{1}^{\prime}\rightarrow A_{1}A_{1}^{\prime}}^{1}$, \ldots,
$\mathcal{R}_{A_{l}^{\prime}\rightarrow A_{l}A_{l}^{\prime}}^{l}$. The
converse of this statement is a direct consequence of the Alicki-Fannes
inequality \cite{AF04}, with a proof proceeding similarly to the steps in
(\ref{eq:discord-approx-1})-(\ref{eq:discord-approx-5}) and a dimension
dependence only on the systems $A_{1}$, \ldots, $A_{l}$. It might be possible
to improve upon the inequality in (\ref{eq:local-recover}), i.e., to have the
$l$-independent inequality:%
\begin{multline}
I\left(  A_{1}A_{1}^{\prime}:\cdots:A_{l}A_{l}^{\prime}\right)  _{\rho
}-I\left(  A_{1}^{\prime}:\cdots:A_{l}^{\prime}\right)  _{\rho}%
\label{eq:conjectured-ineq}\\
\geq-\log F\left(  \rho_{A_{1}A_{1}^{\prime}\cdots A_{l}A_{l}^{\prime}%
},\left(  \mathcal{R}_{A_{1}^{\prime}\rightarrow A_{1}A_{1}^{\prime}}%
^{1}\otimes\cdots\otimes\mathcal{R}_{A_{l}^{\prime}\rightarrow A_{l}%
A_{l}^{\prime}}^{l}\right)  \left(  \rho_{A_{1}^{\prime}\cdots A_{l}^{\prime}%
}\right)  \right)  .
\end{multline}
We elaborate more on this possibility in Section~\ref{sec:potential-improve}.

Regardless of whether the conjectured inequality in (\ref{eq:conjectured-ineq}%
) holds, we can already establish two consequences of the inequality in
(\ref{eq:local-recover}):

\begin{enumerate}
\item The multipartite symmetric quantum (MSQ)\ discord from \cite{PHH08} is
nearly equal to zero if and only if the multipartite state $\rho_{A_{1}\cdots
A_{l}}$ is locally recoverable after performing measurements on each of the
systems $A_{1}$, \ldots, $A_{l}$. Equivalently, such a state has MSQ discord
nearly equal to zero if and only if it is an approximate fixed point of a
tensor product of entanglement breaking channels. Recall that any entanglement
breaking channel can be written as a composition of a measurement channel
followed by a preparation channel \cite{HSR03}. We detail this result in
Section~\ref{sec:MSQ-faithful}.

\item The conditional entanglement of multipartite information (CEMI)\ from
\cite{YHW08} is faithful, i.e., it vanishes if and only if a multipartite
state $\rho_{A_{1}\cdots A_{l}}$ is fully separable. We detail this result in
Section~\ref{sec:CEMI-faithful}.
\end{enumerate}

Additional contributions of this paper are to show explicitly in
Section~\ref{sec:CEMI-upper-bound}\ that the CEMI\ is an upper bound on the
multipartite squashed entanglement from \cite{YHHHOS09,AHS08} and in
Section~\ref{sec:op-int}\ to give an operational interpretation of the
MSQ\ discord in terms of the partial state distribution protocol from
\cite{YHW08}. We conclude in Section~\ref{sec:conclusion}\ with a summary of
results and directions for future work.

\section{Local recoverability}

In this section, we give a proof of the local recoverability inequality in
(\ref{eq:local-recover}). We start with an explicit proof of the following
lemma, which is implicit in the partial state distribution protocol of
\cite{YHW08}:

\begin{lemma}
\label{lem:chain-for-gap}Let $\rho_{A_{1}A_{1}^{\prime}\cdots A_{l}%
A_{l}^{\prime}}$ be a multipartite quantum state. Then we have the following
identity:%
\begin{equation}
I\left(  A_{1}A_{1}^{\prime}:\cdots:A_{l}A_{l}^{\prime}\right)  _{\rho
}-I\left(  A_{1}^{\prime}:\cdots:A_{l}^{\prime}\right)  _{\rho}=\sum_{i=1}%
^{l}I\left(  A_{i};A_{1}^{i-1}A_{\left[  l\right]  \backslash\left\{
i\right\}  }^{\prime}|A_{i}^{\prime}\right)  ,
\end{equation}
where $A_{1}^{i-1}\equiv A_{i-1}\cdots A_{1}$ (interpreted to be empty if
$i=1$) and $A_{\left[  l\right]  \backslash\left\{  i\right\}  }^{\prime}$ is
a shorthand indicating all of the $A^{\prime}$ systems except for
$A_{i}^{\prime}$. In addition, the expansion on the right-hand side can
proceed in any order.
\end{lemma}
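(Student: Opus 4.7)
The plan is to prove the identity by direct manipulation of entropies, using only the definition of the multipartite information and the chain rule for conditional entropy; no deep tool is needed here, so I do not expect any significant obstacle beyond careful bookkeeping of the indices.

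First I would expand both multipartite informations according to the definition in (\ref{eq:multi-info}):
\begin{align*}
I(A_{1}A_{1}^{\prime}:\cdots:A_{l}A_{l}^{\prime})_{\rho}-I(A_{1}^{\prime}:\cdots:A_{l}^{\prime})_{\rho}
&=\sum_{i=1}^{l}\bigl[H(A_{i}A_{i}^{\prime})_{\rho}-H(A_{i}^{\prime})_{\rho}\bigr]\\
&\quad -H(A_{1}A_{1}^{\prime}\cdots A_{l}A_{l}^{\prime})_{\rho}+H(A_{1}^{\prime}\cdots A_{l}^{\prime})_{\rho}.
\end{align*}
The first sum collapses into $\sum_{i=1}^{l}H(A_{i}|A_{i}^{\prime})_{\rho}$, while the remaining two terms combine as $-H(A_{1}\cdots A_{l}|A_{1}^{\prime}\cdots A_{l}^{\prime})_{\rho}$ by the definition of conditional entropy, so the whole left-hand side becomes
\begin{equation*}
\sum_{i=1}^{l}H(A_{i}|A_{i}^{\prime})_{\rho}-H(A_{1}\cdots A_{l}\,|\,A_{1}^{\prime}\cdots A_{l}^{\prime})_{\rho}.
\end{equation*}

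Next I would apply the entropy chain rule to the joint conditional entropy with respect to the ordering $A_{1},A_{2},\ldots,A_{l}$:
\begin{equation*}
H(A_{1}\cdots A_{l}\,|\,A_{1}^{\prime}\cdots A_{l}^{\prime})_{\rho}=\sum_{i=1}^{l}H(A_{i}\,|\,A_{1}^{i-1}A_{1}^{\prime}\cdots A_{l}^{\prime})_{\rho}.
\end{equation*}
Subtracting the $i$-th term of this sum from $H(A_{i}|A_{i}^{\prime})_{\rho}$ and using that $A_{1}^{\prime}\cdots A_{l}^{\prime}=A_{i}^{\prime}\cup A_{[l]\setminus\{i\}}^{\prime}$, each pair of terms assembles into a conditional mutual information by definition:
\begin{equation*}
H(A_{i}|A_{i}^{\prime})_{\rho}-H(A_{i}\,|\,A_{1}^{i-1}A_{[l]\setminus\{i\}}^{\prime}A_{i}^{\prime})_{\rho}=I(A_{i};A_{1}^{i-1}A_{[l]\setminus\{i\}}^{\prime}\,|\,A_{i}^{\prime})_{\rho}.
\end{equation*}
Summing over $i$ yields the claimed identity.

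Finally, to justify the last sentence of the statement, I would observe that the chain rule used above is valid for any permutation of $\{A_{1},\ldots,A_{l}\}$; applying it in a different order produces a different but equally valid decomposition of the same quantity, so the right-hand side can be expanded with the $A_{i}$'s taken in any order. The only care required throughout is with the convention that $A_{1}^{i-1}$ is empty when $i=1$, in which case the corresponding conditional mutual information reduces to $I(A_{1};A_{[l]\setminus\{1\}}^{\prime}|A_{1}^{\prime})_{\rho}$.
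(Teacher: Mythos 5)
Your proposal is correct and follows essentially the same route as the paper's own proof: expand both multipartite informations, collapse the entropy differences into conditional entropies, apply the chain rule to $H(A_{1}\cdots A_{l}\,|\,A_{1}^{\prime}\cdots A_{l}^{\prime})_{\rho}$, and reassemble each pair of terms into a conditional mutual information, with the arbitrary-order claim following from the permutation freedom in the chain rule. No gaps.
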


\begin{proof}
Consider that%
\begin{align}
&  I\left(  A_{1}A_{1}^{\prime}:\cdots:A_{l}A_{l}^{\prime}\right)  _{\rho
}-I\left(  A_{1}^{\prime}:\cdots:A_{l}^{\prime}\right)  _{\rho}\nonumber\\
&  =\sum_{i=1}^{l}H\left(  A_{i}A_{i}^{\prime}\right)  _{\rho}-H\left(
A_{1}A_{1}^{\prime}A_{2}A_{2}^{\prime}\cdots A_{l}A_{l}^{\prime}\right)
_{\rho}-\left[  \sum_{i=1}^{l}H\left(  A_{i}^{\prime}\right)  _{\rho}-H\left(
A_{1}^{\prime}A_{2}^{\prime}\cdots A_{l}^{\prime}\right)  _{\rho}\right] \\
&  =\sum_{i=1}^{l}H\left(  A_{i}|A_{i}^{\prime}\right)  _{\rho}-H\left(
A_{1}A_{2}\cdots A_{l}|A_{1}^{\prime}A_{2}^{\prime}\cdots A_{l}^{\prime
}\right)  _{\rho}\\
&  =\sum_{i=1}^{l}H\left(  A_{i}|A_{i}^{\prime}\right)  _{\rho}-\sum_{i=1}%
^{l}H\left(  A_{i}|A_{1}^{i-1}A_{1}^{\prime}A_{2}^{\prime}\cdots A_{l}%
^{\prime}\right)  _{\rho}\\
&  =\sum_{i=1}^{l}\left[  H\left(  A_{i}|A_{i}^{\prime}\right)  _{\rho
}-H\left(  A_{i}|A_{1}^{i-1}A_{1}^{\prime}A_{2}^{\prime}\cdots A_{l}^{\prime
}\right)  _{\rho}\right] \\
&  =\sum_{i=1}^{l}I\left(  A_{i};A_{1}^{i-1}A_{\left[  l\right]
\backslash\left\{  i\right\}  }^{\prime}|A_{i}^{\prime}\right)  .
\end{align}
The first equality is an expansion following from definitions. The second
equality uses the chain rule for conditional entropy, i.e., $H\left(
A|B\right)  =H\left(  AB\right)  -H\left(  B\right)  $. The third equality
follows from an inductive application of the chain rule for conditional
entropy. The final equality follows from an expansion for conditional mutual
information as $I\left(  A;B|C\right)  =H\left(  A|C\right)  -H\left(
A|CB\right)  $. The statement about expanding in an arbitrary order follows
because the expansion in the third equality can proceed in any order.
\end{proof}

\bigskip

\begin{proof}
[Proof of (\ref{eq:local-recover})]We can now easily prove the inequality in
(\ref{eq:local-recover}). From Lemma~\ref{lem:chain-for-gap}, we can conclude
that%
\begin{equation}
I\left(  A_{1}A_{1}^{\prime}:\cdots:A_{l}A_{l}^{\prime}\right)  _{\rho
}-I\left(  A_{1}^{\prime}:\cdots:A_{l}^{\prime}\right)  _{\rho}\geq I\left(
A_{i};A_{\left[  l\right]  \backslash\left\{  i\right\}  }A_{\left[  l\right]
\backslash\left\{  i\right\}  }^{\prime}|A_{i}^{\prime}\right)
\end{equation}
for all $i\in\left\{  1,2,\ldots,l\right\}  $ because a)\ the expansion there
can proceed in any order and b)\ the conditional mutual information is
non-negative \cite{PhysRevLett.30.434,LR73}. From the inequality in
(\ref{eq:FR}), we can then conclude that there exists a recovery map
$\mathcal{R}_{A_{i}^{\prime}\rightarrow A_{i}A_{i}^{\prime}}^{i}$ such that%
\begin{align}
&  I\left(  A_{1}A_{1}^{\prime}:\cdots:A_{l}A_{l}^{\prime}\right)  _{\rho
}-I\left(  A_{1}^{\prime}:\cdots:A_{l}^{\prime}\right)  _{\rho}\nonumber\\
&  \geq I\left(  A_{i};A_{\left[  l\right]  \backslash\left\{  i\right\}
}A_{\left[  l\right]  \backslash\left\{  i\right\}  }^{\prime}|A_{i}^{\prime
}\right) \\
&  \geq\frac{1}{4}\left\Vert \rho_{A_{1}A_{1}^{\prime}\cdots A_{l}%
A_{l}^{\prime}}-\mathcal{R}_{A_{i}^{\prime}\rightarrow A_{i}A_{i}^{\prime}%
}^{i}\left(  \rho_{A_{\left[  l\right]  \backslash\left\{  i\right\}  }%
A_{1}^{\prime}\cdots A_{l}^{\prime}}\right)  \right\Vert _{1}^{2},
\end{align}
which is equivalent to%
\begin{align}
&  2\sqrt{I\left(  A_{1}A_{1}^{\prime}:\cdots:A_{l}A_{l}^{\prime}\right)
_{\rho}-I\left(  A_{1}^{\prime}:\cdots:A_{l}^{\prime}\right)  _{\rho}%
}\nonumber\\
&  \geq\left\Vert \rho_{A_{1}A_{1}^{\prime}\cdots A_{l}A_{l}^{\prime}%
}-\mathcal{R}_{A_{i}^{\prime}\rightarrow A_{i}A_{i}^{\prime}}^{i}\left(
\rho_{A_{\left[  l\right]  \backslash\left\{  i\right\}  }A_{1}^{\prime}\cdots
A_{l}^{\prime}}\right)  \right\Vert _{1}\\
&  =\left\Vert \rho_{A_{1}A_{1}^{\prime}\cdots A_{l}A_{l}^{\prime}}-\left(
\mathcal{R}_{A_{i}^{\prime}\rightarrow A_{i}A_{i}^{\prime}}^{i}\circ
\text{Tr}_{A_{i}}\right)  \left(  \rho_{A_{1}A_{1}^{\prime}\cdots A_{l}%
A_{l}^{\prime}}\right)  \right\Vert _{1}.
\end{align}
Using the triangle inequality $l$ times and monotonicity of the trace distance
under quantum operations (i.e., that $\left\Vert \omega-\tau\right\Vert
_{1}\geq\left\Vert \mathcal{N}\left(  \omega\right)  -\mathcal{N}\left(
\tau\right)  \right\Vert _{1}$ for density operators $\omega$ and $\tau$ and a
quantum channel $\mathcal{N}$), we can then conclude that%
\begin{multline}
2l\sqrt{I\left(  A_{1}A_{1}^{\prime}:\cdots:A_{l}A_{l}^{\prime}\right)
_{\rho}-I\left(  A_{1}^{\prime}:\cdots:A_{l}^{\prime}\right)  _{\rho}}\\
\geq\left\Vert \rho_{A_{1}A_{1}^{\prime}\cdots A_{l}A_{l}^{\prime}}-\left(
\mathcal{R}_{A_{1}^{\prime}\rightarrow A_{1}A_{1}^{\prime}}^{1}\otimes
\cdots\otimes\mathcal{R}_{A_{l}^{\prime}\rightarrow A_{l}A_{l}^{\prime}}%
^{l}\right)  \left(  \rho_{A_{1}^{\prime}\cdots A_{l}^{\prime}}\right)
\right\Vert _{1},
\end{multline}
which is equivalent to (\ref{eq:local-recover}).
\end{proof}

\begin{remark}
\label{rem:extended-proof}The above proof demonstrates that there are in fact
$2^{l}$ inequalities that hold, depending on whether one chooses to apply the
trace-out-and-recovery maps or not. The inequality then takes on the following
form:%
\begin{multline}
I\left(  A_{1}A_{1}^{\prime}:\cdots:A_{l}A_{l}^{\prime}\right)  _{\rho
}-I\left(  A_{1}^{\prime}:\cdots:A_{l}^{\prime}\right)  _{\rho}\\
\geq\left[  \frac{1}{2\left\vert j^{l}\right\vert }\left\Vert \rho_{A_{1}%
A_{1}^{\prime}\cdots A_{l}A_{l}^{\prime}}-\left(  \left(  \mathcal{R}%
_{A_{1}^{\prime}\rightarrow A_{1}A_{1}^{\prime}}^{1}\circ\operatorname{Tr}%
_{A_{1}}\right)  ^{j_{1}}\otimes\cdots\otimes\left(  \mathcal{R}%
_{A_{l}^{\prime}\rightarrow A_{l}A_{l}^{\prime}}^{l}\circ\operatorname{Tr}%
_{A_{l}}\right)  ^{j_{l}}\right)  \left(  \rho_{A_{1}A_{1}^{\prime}\cdots
A_{l}A_{l}^{\prime}}\right)  \right\Vert _{1}\right]  ^{2},
\end{multline}
where $j^{l}\equiv j_{1}\cdots j_{l}$ is a binary string indicating which
recovery maps are applied and $\left\vert j^{l}\right\vert $ is the number of
ones in $j^{l}$ if $j^{l}$ is not the all-zeros bit string, with $\left\vert
j^{l}\right\vert $ otherwise being equal to one.
\end{remark}

\section{Approximate faithfulness of the MSQ discord}

\label{sec:MSQ-faithful}In this section, we provide a generalization of the
approximate faithfulness of quantum discord \cite[Proposition~29]{SW14}\ to
the multipartite case. In particular, recall the multipartite symmetric
quantum (MSQ)\ discord from \cite{PHH08}:%
\begin{equation}
D\left(  \overline{A_{1}}:\cdots:\overline{A_{l}}\right)  _{\rho}\equiv
I\left(  A_{1}:\cdots:A_{l}\right)  _{\rho}-\sup_{\left\{  \mathcal{M}%
_{A_{1}\rightarrow X_{1}}^{1},\ldots,\mathcal{M}_{A_{l}\rightarrow X_{l}}%
^{l}\right\}  }I\left(  X_{1}:\cdots:X_{l}\right)  _{\omega},
\end{equation}
where $\rho_{A_{1}\cdots A_{l}}$ is a multipartite quantum state and
$\omega_{X_{1}\cdots X_{l}}$ is the state resulting from local measurements of
$\rho_{A_{1}\cdots A_{l}}$ according to the measurement maps $\mathcal{M}%
_{A_{1}\rightarrow X_{1}}^{1},\ldots,\mathcal{M}_{A_{l}\rightarrow X_{l}}^{l}%
$:%
\begin{equation}
\omega_{X_{1}\cdots X_{l}}\equiv\left(  \mathcal{M}_{A_{1}\rightarrow X_{1}%
}^{1}\otimes\cdots\otimes\mathcal{M}_{A_{l}\rightarrow X_{l}}^{l}\right)
\left(  \rho_{A_{1}\cdots A_{l}}\right)  .\label{eq:omega-state}%
\end{equation}
The measurement map $\mathcal{M}_{A_{i}\rightarrow X_{i}}^{i}$ is defined as%
\begin{equation}
\mathcal{M}_{A_{i}\rightarrow X_{i}}^{i}\left(  \sigma_{A_{i}}\right)
\equiv\sum_{x}\text{Tr}\left\{  \Lambda_{A_{i}}^{x}\sigma_{A_{i}}\right\}
\left\vert x\right\rangle \left\langle x\right\vert _{A_{i}}%
\end{equation}
for some positive semidefinite operators $\Lambda_{A_{i}}^{x}$ which sum to
the identity and where $\left\{  \left\vert x\right\rangle _{A_{i}}\right\}  $
is an orthonormal basis for the system $A_{i}$.

\begin{proposition}
[Approximate faithfulness]The MSQ\ discord is nearly equal to zero if and only
if $\rho_{A_{1}\cdots A_{l}}$ is an approximate fixed point of a tensor
product of entanglement breaking (EB)\ channels $\mathcal{E}_{A_{1}}^{1}$,
\ldots, $\mathcal{E}_{A_{l}}^{l}$. That is, suppose that there exist
EB\ channels $\mathcal{E}_{A_{1}}^{1}$, \ldots, $\mathcal{E}_{A_{l}}^{l}$ such
that%
\begin{equation}
\left\Vert \rho_{A_{1}\cdots A_{l}}-\left(  \mathcal{E}_{A_{1}}^{1}%
\otimes\cdots\otimes\mathcal{E}_{A_{l}}^{l}\right)  \left(  \rho_{A_{1}\cdots
A_{l}}\right)  \right\Vert _{1}\leq\varepsilon
\end{equation}
for some $\varepsilon\in\left[  0,1\right]  $. Then%
\begin{equation}
D\left(  \overline{A_{1}}:\cdots:\overline{A_{l}}\right)  _{\rho}\leq\left(
l+1\right)  h_{2}\left(  \varepsilon/2\right)  +\varepsilon\sum_{i=1}^{l}%
\log\left(  \left\vert A_{i}\right\vert \right)  , \label{eq:small-discord}%
\end{equation}
where $h_{2}\left(  \varepsilon\right)  $ is the binary entropy with the
property that $\lim_{\varepsilon\searrow0}h_{2}\left(  \varepsilon\right)
=0$. Conversely, suppose that%
\begin{equation}
D\left(  \overline{A_{1}}:\cdots:\overline{A_{l}}\right)  _{\rho}%
\leq\varepsilon
\end{equation}
for some $\varepsilon>0$. Then there exist EB\ channels $\mathcal{E}_{A_{1}%
}^{1}$, \ldots, $\mathcal{E}_{A_{l}}^{l}$ such that%
\begin{equation}
\left\Vert \rho_{A_{1}\cdots A_{l}}-\left(  \mathcal{E}_{A_{1}}^{1}%
\otimes\cdots\otimes\mathcal{E}_{A_{l}}^{l}\right)  \left(  \rho_{A_{1}\cdots
A_{l}}\right)  \right\Vert _{1}\leq2l\sqrt{\varepsilon}.
\label{eq:small-discord-other}%
\end{equation}

\end{proposition}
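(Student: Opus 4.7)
The plan is to prove the two implications separately. The forward direction (closeness to a product of EB channels $\Rightarrow$ small MSQ discord) follows from continuity of the multipartite information via Fannes--Audenaert and is the easier half; the converse is the substantive direction and reduces the discord gap to the local recoverability inequality (\ref{eq:local-recover}) via an isometric dilation of the discord-optimal measurements.

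For the forward direction, I factor each EB channel as $\mathcal{E}^i = \mathcal{P}^i_{X_i\to A_i}\circ\mathcal{M}^i_{A_i\to X_i}$ through a classical register \cite{HSR03}, and use these $\mathcal{M}^i$ as a suboptimal measurement choice in the definition of the MSQ discord. This immediately gives
\[
D(\overline{A_1}:\cdots:\overline{A_l})_\rho \leq I(A_1:\cdots:A_l)_\rho - I(X_1:\cdots:X_l)_\omega,
\]
where $\omega = (\mathcal{M}^1\otimes\cdots\otimes\mathcal{M}^l)(\rho)$, and monotonicity of the multipartite information under the local preparations $\mathcal{P}^i$ further bounds this by $I(A_1:\cdots:A_l)_\rho - I(A_1:\cdots:A_l)_{\rho'}$ with $\rho' = (\mathcal{E}^1\otimes\cdots\otimes\mathcal{E}^l)(\rho)$. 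Applying the Fannes--Audenaert bound $|H(\sigma) - H(\tau)| \leq \tfrac{\varepsilon}{2}\log d + h_2(\tfrac{\varepsilon}{2})$ to each of the $l$ marginal entropy differences (dimensions $|A_i|$) and to the single joint entropy difference (dimension $\prod_i|A_i|$) then produces exactly the right-hand side of (\ref{eq:small-discord}): the $l+1$ binary entropies assemble to $(l+1)h_2(\varepsilon/2)$ and the dimensional parts sum to $\varepsilon\sum_i\log|A_i|$.

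For the converse, pick measurements $\{\mathcal{M}^i\}$ that (nearly) realize $D\leq\varepsilon$, and let $V^i_{A_i\to X_iA_i'}$ with $A_i'\cong A_i$ be an isometric Stinespring dilation of the associated quantum instrument $\sigma\mapsto\sum_x |x\rangle\langle x|_{X_i}\otimes K^x\sigma K^{x\dagger}$ (where $K^{x\dagger}K^x = \Lambda^x$), so that $\mathcal{M}^i = \operatorname{Tr}_{A_i'}\circ V^i(\cdot)V^{i\dagger}$. Set $\tau = (V^1\otimes\cdots\otimes V^l)\rho(V^{1\dagger}\otimes\cdots\otimes V^{l\dagger})$. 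Isometric invariance of the von Neumann entropy gives $I(X_1A_1':\cdots:X_lA_l')_\tau = I(A_1:\cdots:A_l)_\rho$, while $\tau_{X_1\cdots X_l} = \omega$ gives $I(X_1:\cdots:X_l)_\tau = I(X_1:\cdots:X_l)_\omega$. Consequently the discord gap coincides \emph{exactly} with the left-hand side of (\ref{eq:local-recover}) applied to $\tau$, with the $X_i$'s playing the role of the primed systems there and the $A_i'$'s playing the role of the unprimed systems. Invoking (\ref{eq:local-recover}) then yields recovery maps $\mathcal{R}^i_{X_i\to X_iA_i'}$ with $\|\tau - (\mathcal{R}^1\otimes\cdots\otimes\mathcal{R}^l)(\omega)\|_1 \leq 2l\sqrt{\varepsilon}$. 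I finish by undoing the isometries via channels $\mathcal{N}^i_{X_iA_i'\to A_i}$ extending $V^{i\dagger}$ off the range of $V^i$ (e.g.\ $\mathcal{N}^i(\xi) = V^{i\dagger}\xi V^i + \operatorname{Tr}[(I - V^iV^{i\dagger})\xi]\,|0\rangle\langle 0|_{A_i}$), which satisfy $(\mathcal{N}^1\otimes\cdots\otimes\mathcal{N}^l)(\tau)=\rho$; monotonicity of the trace norm converts the previous bound into (\ref{eq:small-discord-other}) with $\mathcal{E}^i := \mathcal{N}^i\circ\mathcal{R}^i\circ\mathcal{M}^i$, and each $\mathcal{E}^i$ is entanglement breaking because it factors through the classical register $X_i$.

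The main obstacle is conceptual rather than technical: engineering the dilation so that the MSQ discord gap literally becomes a left-hand side of (\ref{eq:local-recover}). Once that identification is secured, (\ref{eq:local-recover}) together with monotonicity of the trace norm and the measure-prepare decomposition of EB channels do essentially all of the work. Minor technicalities, such as handling the case where the supremum in the definition of MSQ discord is not attained (by passing to a $\delta$-approximate optimizer and sending $\delta\to 0$) and specifying the extension of $V^{i\dagger}$ off its range, are routine.
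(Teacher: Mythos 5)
Your proposal is correct and follows essentially the same route as the paper's proof: the forward direction via the measure--prepare factorization of EB channels, monotonicity of the multipartite information, and Fannes--Audenaert; the converse by dilating the measurements isometrically so that the discord gap becomes the left-hand side of (\ref{eq:local-recover}), applying that inequality, and composing the recovery maps with a channel inverting the isometry on its range (your $\mathcal{N}^i$ is the paper's $\mathcal{T}^i$, and your $A_i'$ plays the role of the paper's environment $E_i$). No gaps.
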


\begin{proof}
The proof of the inequality in (\ref{eq:small-discord}) proceeds exactly as in
the proof of \cite[Proposition 29]{SW14}. Consider that every EB\ channel can
be written as a composition of a measurement map and a preparation
\cite{HSR03}, i.e., $\mathcal{E}_{A_{i}}^{i}=\mathcal{P}_{X_{i}\rightarrow
A_{i}}^{i}\circ\mathcal{M}_{A_{i}\rightarrow X_{i}}^{i}$. Then%
\begin{align}
D\left(  \overline{A_{1}}:\cdots:\overline{A_{l}}\right)  _{\rho} &  =I\left(
A_{1}:\cdots:A_{l}\right)  _{\rho}-\sup_{\left\{  \mathcal{M}_{A_{1}%
\rightarrow X_{1}}^{1},\ldots,\mathcal{M}_{A_{l}\rightarrow X_{l}}%
^{l}\right\}  }I\left(  X_{1}:\cdots:X_{l}\right)  _{\omega}%
\label{eq:discord-approx-1}\\
&  \leq I\left(  A_{1}:\cdots:A_{l}\right)  _{\rho}-I\left(  X_{1}%
:\cdots:X_{l}\right)  _{\bigotimes\limits_{i}\mathcal{M}_{A_{i}\rightarrow
X_{i}}^{i}\left(  \rho\right)  }\\
&  \leq I\left(  A_{1}:\cdots:A_{l}\right)  _{\rho}-I\left(  A_{1}%
:\cdots:A_{l}\right)  _{\bigotimes\limits_{i}\mathcal{P}_{X_{i}\rightarrow
A_{i}}^{i}\circ\mathcal{M}_{A_{i}\rightarrow X_{i}}^{i}\left(  \rho\right)
}\\
&  =I\left(  A_{1}:\cdots:A_{l}\right)  _{\rho}-I\left(  A_{1}:\cdots
:A_{l}\right)  _{\bigotimes\limits_{i}\mathcal{E}_{A_{i}}^{i}\left(
\rho\right)  }\\
&  \leq\left(  l+1\right)  h_{2}\left(  \varepsilon/2\right)  +\varepsilon
\sum_{i=1}^{l}\log\left(  \left\vert A_{i}\right\vert \right)
.\label{eq:discord-approx-5}%
\end{align}
The first inequality follows by choosing the measurement maps not to be the
optimal ones, but instead the ones making up the first part of the
EB\ channels $\left\{  \mathcal{E}_{A_{i}}^{i}\right\}  $. The second
inequality follows from the fact that the multipartite information is monotone
under local operations (here being the processing of the measured systems
according to the preparation maps). The last inequality is a consequence of
the Fannes-Audenaert inequality \cite{A07}, which states that%
\begin{equation}
\left\vert H\left(  \rho\right)  -H\left(  \sigma\right)  \right\vert \leq
T\log\left(  d-1\right)  +h_{2}\left(  T\right)  ,
\end{equation}
with $T=\frac{1}{2}\left\Vert \rho-\sigma\right\Vert _{1}$ and $d$ the
dimension of the density operators $\rho$ and $\sigma$.

After recalling that any quantum channel (incuding measurement maps) can be
understood as an isometric embedding of the input in a tensor-product Hilbert
space followed by a partial trace \cite{S55}, we can see that
(\ref{eq:small-discord-other}) is a consequence of the inequality in
(\ref{eq:local-recover}). Specifically, for a particular set of measurements,
we can write%
\begin{equation}
I\left(  A_{1}:\cdots:A_{l}\right)  _{\rho}-I\left(  X_{1}:\cdots
:X_{l}\right)  _{\omega}=I\left(  X_{1}E_{1}:\cdots:X_{l}E_{l}\right)
_{\omega}-I\left(  X_{1}:\cdots:X_{l}\right)  _{\omega}%
,\label{eq:MI-rewrite-1}%
\end{equation}
where%
\begin{equation}
\omega_{X_{1}E_{1}\cdots X_{l}E_{l}}\equiv\left(  \mathcal{U}_{A_{1}%
\rightarrow X_{1}E_{1}}^{\mathcal{M}^{1}}\otimes\cdots\otimes\mathcal{U}%
_{A_{l}\rightarrow X_{l}E_{l}}^{\mathcal{M}^{l}}\right)  \left(  \rho
_{A_{1}\cdots A_{l}}\right)  \label{eq:full-state}%
\end{equation}
and $\mathcal{U}_{A_{i}\rightarrow X_{i}E_{i}}^{\mathcal{M}^{i}}$ is an
isometric CPTP map, so that%
\begin{equation}
\mathcal{U}_{A_{i}\rightarrow X_{i}E_{i}}^{\mathcal{M}^{i}}\left(
\cdot\right)  \equiv U_{A_{i}\rightarrow X_{i}E_{i}}^{\mathcal{M}^{i}}\left(
\cdot\right)  \left[  U_{A_{i}\rightarrow X_{i}E_{i}}^{\mathcal{M}^{i}%
}\right]  ^{\dag},
\end{equation}
where $U_{A_{i}\rightarrow X_{i}E_{i}}^{\mathcal{M}^{i}}$ is an isometric
extension of the measurement map $\mathcal{M}_{A_{i}\rightarrow X_{i}}^{i}$.
Then (\ref{eq:MI-rewrite-1}) follows because the multipartite information is
invariant under local isometries, as one can see from its definition in
(\ref{eq:multi-info}) and invariance of quantum entropy under isometries. The
inequality (\ref{eq:small-discord-other}) then follows because there exist
recovery maps $\mathcal{R}_{X_{1}\rightarrow X_{1}E_{1}}^{1}$, \ldots,
$\mathcal{R}_{X_{l}\rightarrow X_{l}E_{l}}^{l}$ such that%
\begin{align}
&  I\left(  X_{1}E_{1}:\cdots:X_{l}E_{l}\right)  _{\omega}-I\left(
X_{1}:\cdots:X_{l}\right)  _{\omega}\nonumber\\
&  \geq\left[  \frac{1}{2l}\left\Vert \omega_{X_{1}E_{1}\cdots X_{l}E_{l}%
}-\left(  \mathcal{R}_{X_{1}\rightarrow X_{1}E_{1}}^{1}\otimes\cdots
\otimes\mathcal{R}_{X_{l}\rightarrow X_{l}E_{l}}^{l}\right)  \left(
\omega_{X_{1}\cdots X_{l}}\right)  \right\Vert _{1}\right]  ^{2}\\
&  \geq\left[  \frac{1}{2l}\left\Vert \rho_{A_{1}\cdots A_{l}}-\left(
\mathcal{P}_{X_{1}\rightarrow A_{1}}^{1}\otimes\cdots\otimes\mathcal{P}%
_{X_{l}\rightarrow A_{l}}^{l}\right)  \left(  \omega_{X_{1}\cdots X_{l}%
}\right)  \right\Vert _{1}\right]  ^{2}\\
&  =\left[  \frac{1}{2l}\left\Vert \rho_{A_{1}\cdots A_{l}}-\left(
\mathcal{E}_{A_{1}}^{1}\otimes\cdots\otimes\mathcal{E}_{A_{l}}^{l}\right)
\left(  \rho_{A_{1}\cdots A_{l}}\right)  \right\Vert _{1}\right]  ^{2}.
\end{align}
The first inequality is a consequence of (\ref{eq:local-recover}). We define
the following CPTP\ maps:%
\begin{multline}
\mathcal{T}_{X_{i}E_{i}\rightarrow A_{i}}^{i}\left(  \gamma_{X_{i}E_{i}%
}\right)  \equiv\left[  U_{A_{i}\rightarrow X_{i}E_{i}}^{\mathcal{M}^{i}%
}\right]  ^{\dag}\gamma_{X_{i}E_{i}}U_{A_{i}\rightarrow X_{i}E_{i}%
}^{\mathcal{M}^{i}}\\
+\text{Tr}\left\{  \left(  I_{X_{i}E_{i}}-U_{A_{i}\rightarrow X_{i}E_{i}%
}^{\mathcal{M}^{i}}\left[  U_{A_{i}\rightarrow X_{i}E_{i}}^{\mathcal{M}^{i}%
}\right]  ^{\dag}\right)  \gamma_{X_{i}E_{i}}\right\}  \sigma_{A_{i}}^{i},
\end{multline}
where $\sigma_{A_{i}}^{i}$ is some state on system $A_{i}$. Observe that%
\begin{equation}
\left(  \mathcal{T}_{X_{1}E_{1}\rightarrow A_{1}}^{1}\otimes\cdots
\otimes\mathcal{T}_{X_{l}E_{l}\rightarrow A_{l}}^{l}\right)  \left(
\omega_{X_{1}E_{1}\cdots X_{l}E_{l}}\right)  =\rho_{A_{1}\cdots A_{l}}.
\end{equation}
Then the second inequality above follows by defining the preparation maps
$\mathcal{P}_{X_{i}\rightarrow A_{i}}^{i}$ as%
\begin{equation}
\mathcal{P}_{X_{i}\rightarrow A_{i}}^{i}\equiv\mathcal{T}_{X_{i}%
E_{i}\rightarrow A_{i}}^{i}\circ\mathcal{R}_{X_{i}\rightarrow X_{i}E_{i}}^{i},
\end{equation}
and noting that the trace distance does not increase under the CPTP\ map
$\mathcal{T}_{X_{1}E_{1}\rightarrow A_{1}}^{1}\otimes\cdots\otimes
\mathcal{T}_{X_{l}E_{l}\rightarrow A_{l}}^{l}$. (The maps $\mathcal{P}%
_{X_{i}\rightarrow A_{i}}^{i}$ are preparations because they act on classical
registers.)\ The last equality follows from the definition of $\omega
_{X_{1}\cdots X_{l}}$ in (\ref{eq:omega-state}) and the fact that any
composition of a measurement map followed by a preparation map is entanglement
breaking \cite{HSR03}.
\end{proof}

\section{Faithfulness of the CEMI}

\label{sec:CEMI-faithful}The conditional entanglement of multipartite
information (CEMI)\ is an entanglement measure defined in \cite{YHW08}. It
bears some similarities with the squashed entanglement \cite{CW04} and its
multipartite version \cite{YHHHOS09,AHS08}. In \cite{YHW07,YHW08}, the CEMI
was shown to be non-negative, monotone under local operations and classical
communication, convex, additive, asymptotically continuous, and equal to zero
for separable states. It is not known to be monogamous. Given a multipartite
state $\rho_{A_{1}\cdots A_{l}}$, the CEMI is defined as follows:%
\begin{equation}
E_{I}\left(  A_{1}:\cdots:A_{l}\right)  _{\rho}\equiv\frac{1}{2}\inf
_{\rho_{A_{1}A_{1}^{\prime}\cdots A_{l}A_{l}^{\prime}}}I\left(  A_{1}%
A_{1}^{\prime}:\cdots:A_{l}A_{l}^{\prime}\right)  _{\rho}-I\left(
A_{1}^{\prime}:\cdots:A_{l}^{\prime}\right)  _{\rho}, \label{eq:CEMI}%
\end{equation}
where the infimum is over all extensions $\rho_{A_{1}A_{1}^{\prime}\cdots
A_{l}A_{l}^{\prime}}$ of $\rho_{A_{1}\cdots A_{l}}$, i.e.,%
\begin{equation}
\rho_{A_{1}\cdots A_{l}}=\text{Tr}_{A_{1}^{\prime}\cdots A_{l}^{\prime}%
}\left\{  \rho_{A_{1}A_{1}^{\prime}\cdots A_{l}A_{l}^{\prime}}\right\}  .
\end{equation}

In this section, we prove that the CEMI is faithful, i.e., equal to zero if
and only if the state $\rho_{A_{1}\cdots A_{l}}$ is separable. Before doing
so, it may be helpful to review the if-part of this theorem from \cite{YHW08}.
If $\rho_{A_{1}\cdots A_{l}}$ is separable, then it has a decomposition of the
following form \cite{W89}:%
\begin{equation}
\rho_{A_{1}\cdots A_{l}}\equiv\sum_{x}p_{X}\left(  x\right)  \sigma_{A_{1}%
}^{1,x}\otimes\cdots\otimes\sigma_{A_{l}}^{l,x},
\end{equation}
for a probability distribution $p_{X}$ and states $\left\{  \sigma_{A_{1}%
}^{1,x}\right\}  $, \ldots, $\left\{  \sigma_{A_{l}}^{l,x}\right\}  $. In this
case, one particular extension of this state has the following form:%
\begin{equation}
\sum_{x}p_{X}\left(  x\right)  \sigma_{A_{1}}^{1,x}\otimes\left\vert
x\right\rangle \left\langle x\right\vert _{A_{1}^{\prime}}\otimes\cdots
\otimes\sigma_{A_{l}}^{l,x}\otimes\left\vert x\right\rangle \left\langle
x\right\vert _{A_{l}^{\prime}}.
\end{equation}
It is then clear for this particular extension that%
\begin{equation}
I\left(  A_{1}^{\prime}:\cdots:A_{l}^{\prime}\right)  \geq I\left(  A_{1}%
A_{1}^{\prime}:\cdots:A_{l}A_{l}^{\prime}\right)  ,
\end{equation}
because one can produce the systems $A_{1}$, \ldots, $A_{l}$ by local
preparation maps of the form:%
\begin{equation}
\left(  \cdot\right)  \rightarrow\sum_{x}\left\vert x\right\rangle
\left\langle x\right\vert _{A_{i}^{\prime}}\left(  \cdot\right)  \left\vert
x\right\rangle \left\langle x\right\vert _{A_{i}^{\prime}}\otimes\sigma
_{A_{i}}^{i,x}.
\end{equation}
Combined with the inequality in (\ref{eq:multi-monotone}) and the definition
of $E_{I}$ in (\ref{eq:CEMI}), we find that $E_{I}$ is equal to zero if the
state is separable.

We now establish the only-if-part of faithfulness of CEMI, which is a
consequence of the following proposition:

\begin{proposition}
The CEMI of a multipartite state $\rho_{A_{1}\cdots A_{l}}$ obeys the
following bound:%
\begin{equation}
E_{I}\left(  A_{1}:\cdots:A_{l}\right)  _{\rho}\geq\frac{1}{16\cdot\left(
l+1\right)  ^{4}}\left(  \sum_{i=2}^{l}\left\vert A_{i}\right\vert
^{2}\right)  ^{-2}\left\Vert \rho_{A_{1}\cdots A_{l}}-\operatorname{SEP}%
\left(  A_{1}:\cdots:A_{l}\right)  \right\Vert _{1}^{4},
\end{equation}
where $\left\Vert \rho_{A_{1}\cdots A_{l}}-\operatorname{SEP}\left(
A_{1}:\cdots:A_{l}\right)  \right\Vert _{1}$ is the trace distance from
$\rho_{A_{1}\cdots A_{l}}$ to the set of multipartite separable states.
\end{proposition}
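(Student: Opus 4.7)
My strategy is to combine the local recoverability inequality~(\ref{eq:local-recover}) with a tomographic argument that converts the approximation into separability. For any $\eta > 0$, I would take an extension $\omega_{A_1 A_1' \cdots A_l A_l'}$ of $\rho_{A_1 \cdots A_l}$ whose gap $\gamma := I(A_1 A_1':\cdots:A_l A_l')_\omega - I(A_1':\cdots:A_l')_\omega$ satisfies $\gamma \leq 2 E_I(A_1:\cdots:A_l)_\rho + \eta$. By~(\ref{eq:local-recover}) there exist local recovery channels $\mathcal{R}^i_{A_i' \to A_i A_i'}$ for which $\|\omega - \bigotimes_i \mathcal{R}^i(\omega_{A_1'\cdots A_l'})\|_1 \leq 2l\sqrt{\gamma}$. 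Applying $\bigotimes_i \mathrm{Tr}_{A_i'}$ to both arguments and using monotonicity of the trace distance under CPTP maps, one obtains $\|\rho - \sigma\|_1 \leq 2l\sqrt{\gamma}$, where $\sigma := \bigotimes_i \mathcal{N}^i(\omega_{A_1'\cdots A_l'})$ and $\mathcal{N}^i := \mathrm{Tr}_{A_i'}\circ\mathcal{R}^i$ is a channel from $A_i'$ to $A_i$.

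The second step is to approximate $\sigma$ by a genuinely separable state $\sigma_{\mathrm{sep}}$ with a dimension-dependent error. My idea is to apply local informationally complete POVMs on the systems $A_2, \ldots, A_l$, producing a classical-quantum state that is trivially separable across the partition $A_1 \mid A_2 \mid \cdots \mid A_l$ (only $A_1$ remains quantum; the other parties become classical registers, so the state is a convex mixture of product states). A tomographic reconstruction step, together with Fannes-Audenaert continuity bounds of the type used in the MSQ-discord argument of Section~\ref{sec:MSQ-faithful}, would yield an estimate of order $\bigl(\sum_{i=2}^l |A_i|^2\bigr)^{1/2}\gamma^{1/4}$ in trace distance. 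The triangle inequality would then give $\|\rho - \sigma_{\mathrm{sep}}\|_1$ of the form $(l+1)\bigl(\sum_{i=2}^l |A_i|^2\bigr)^{1/2}\gamma^{1/4}$, which after rearranging and taking $\eta \to 0$ reproduces the stated bound. An alternative path would be to invoke the fact established in Section~\ref{sec:CEMI-upper-bound} that the CEMI upper-bounds the multipartite squashed entanglement and then apply known faithfulness bounds for the latter; the direct approach above is closer to the methods developed in the present paper.

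The main obstacle is engineering the tomographic reconstruction so that (i) the dimension factor appears as the sum $\sum_{i=2}^l |A_i|^2$ rather than a product $\prod_i |A_i|^2$ that a naive union bound would give, (ii) the dimension of $A_1$ does not contribute at all (exploiting that the post-measurement classical-quantum state is already separable regardless of $|A_1|$), and (iii) the composition with the quadratic local-recoverability bound produces precisely the fourth-power dependence on trace distance with the prefactor $\frac{1}{16(l+1)^4}$. Threading these three conditions together, while keeping the constants tight, is the most delicate part of the argument and is where the specific structure of the recovery channels produced by the Fawzi-Renner inequality and of informationally complete measurements will need to be exploited simultaneously.
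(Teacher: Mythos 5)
There is a genuine gap at the heart of your second step. After tracing out the primed systems you arrive at $\|\rho-\sigma\|_{1}\leq 2l\sqrt{\gamma}$ with $\sigma=\left(\mathcal{N}^{1}\otimes\cdots\otimes\mathcal{N}^{l}\right)\left(\omega_{A_{1}^{\prime}\cdots A_{l}^{\prime}}\right)$, but this state $\sigma$ has no reason to be anywhere near separable: the reduced extension state $\omega_{A_{1}^{\prime}\cdots A_{l}^{\prime}}$ can be arbitrarily entangled (e.g.\ a GHZ state), and applying a tensor product of local channels preserves that entanglement. Your proposed fix---measuring $A_{2},\ldots,A_{l}$ with informationally complete POVMs and ``tomographically reconstructing''---does not repair this: the post-measurement classical--quantum state is indeed separable, but it is generally at trace distance of order one from $\sigma$ (measurement destroys coherence), and tomographic reconstruction in the ideal limit simply returns $\sigma$ itself, not a separable surrogate. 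No continuity bound of Fannes--Audenaert type closes this gap, because the quantity you would need to control is a genuine entanglement property of $\sigma$, not an entropy difference. Your suggested alternative route---passing to the multipartite squashed entanglement---also fails, since the paper explicitly remarks that faithfulness of the multipartite squashed entanglement remains open precisely because the analogous recovery argument breaks down there.

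The missing idea is to use the recovery maps \emph{iteratively} rather than once. Because each $\mathcal{R}^{i}_{A_{i}^{\prime}\rightarrow A_{i}A_{i}^{\prime}}$ acts on a distinct subsystem $A_{i}^{\prime}$ of the extension, one can apply it $k$ times to produce $k$ approximate copies $A_{i,1},\ldots,A_{i,k}$ of each $A_{i}$, all simultaneously within $lk\cdot 2\sqrt{\gamma}$ of $\rho$ in trace distance (this uses the full family of $2^{l}$ inequalities from Remark~\ref{rem:extended-proof}, not just the single inequality \eqref{eq:local-recover}). Symmetrizing over permutations yields a multipartite $k$-extendible state close to $\rho$, and the multipartite de Finetti theorem (Proposition~\ref{prop:multi-de-finetti}) then bounds its distance to $\operatorname{SEP}$ by $\frac{2}{k}\sum_{i=2}^{l}|A_{i}|^{2}$; optimizing over $k$ produces the fourth-power dependence and the dimension factor $\sum_{i=2}^{l}|A_{i}|^{2}$ (the absence of $|A_{1}|$ reflecting that one party need not be extended in the de Finetti step). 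It is $k$-extendibility plus de Finetti---not tomography---that converts approximate local recoverability into approximate separability.
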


\begin{proof}
The proof of this proposition proceeds along the lines outlined in
\cite{Winterconj,LW14}, an analysis which is repeated in both \cite{SW14}\ and
\cite{FR14}. Let $\varepsilon_{\rho}$ denote the value of the following
quantity for a particular extension $\rho_{A_{1}A_{1}^{\prime}\cdots
A_{l}A_{l}^{\prime}}$:%
\begin{equation}
\varepsilon_{\rho}=I\left(  A_{1}A_{1}^{\prime}:\cdots:A_{l}A_{l}^{\prime
}\right)  _{\rho}-I\left(  A_{1}^{\prime}:\cdots:A_{l}^{\prime}\right)
_{\rho}.
\end{equation}
From Remark~\ref{rem:extended-proof}, we know that there exist recovery maps
$\mathcal{R}_{A_{1}^{\prime}\rightarrow A_{1}A_{1}^{\prime}}^{1}$, \ldots,
$\mathcal{R}_{A_{l}^{\prime}\rightarrow A_{l}A_{l}^{\prime}}^{l}$ such that
the following inequalities hold%
\begin{equation}
\varepsilon_{\rho}\geq\left[  \frac{1}{2l}\left\Vert \rho_{A_{1}A_{1}^{\prime
}\cdots A_{l}A_{l}^{\prime}}-\left(  \left(  \mathcal{R}_{A_{1}^{\prime
}\rightarrow A_{1}A_{1}^{\prime}}^{1}\circ\operatorname{Tr}_{A_{1}}\right)
^{j_{1}}\otimes\cdots\otimes\left(  \mathcal{R}_{A_{l}^{\prime}\rightarrow
A_{l}A_{l}^{\prime}}^{l}\circ\operatorname{Tr}_{A_{l}}\right)  ^{j_{l}%
}\right)  \left(  \rho_{A_{1}A_{1}^{\prime}\cdots A_{l}A_{l}^{\prime}}\right)
\right\Vert _{1}\right]  ^{2},
\end{equation}
where $j^{l}\equiv j_{1}\cdots j_{l}$ is a binary string indicating which
recovery maps are applied. Setting%
\begin{equation}
\delta_{\rho}\equiv2\sqrt{\varepsilon_{\rho}},
\end{equation}
these inequalities are then equivalent to the following ones:%
\begin{equation}
l\cdot\delta_{\rho}\geq\left\Vert \rho_{A_{1}A_{1}^{\prime}\cdots A_{l}%
A_{l}^{\prime}}-\left(  \left(  \mathcal{R}_{A_{1}^{\prime}\rightarrow
A_{1}A_{1}^{\prime}}^{1}\circ\operatorname{Tr}_{A_{1}}\right)  ^{j_{1}}%
\otimes\cdots\otimes\left(  \mathcal{R}_{A_{l}^{\prime}\rightarrow A_{l}%
A_{l}^{\prime}}^{l}\circ\operatorname{Tr}_{A_{l}}\right)  ^{j_{l}}\right)
\left(  \rho_{A_{1}A_{1}^{\prime}\cdots A_{l}A_{l}^{\prime}}\right)
\right\Vert _{1}.\label{eq:starting-ineq-mult}%
\end{equation}
Let $A_{j}^{k}\equiv A_{j,1}\cdots A_{j,k}$ for $j\in\left\{  1,\ldots
,l\right\}  $, and let $\Omega_{A_{1}^{k}\cdots A_{l}^{k}}$ denote the
following state, which results from many repeated atttempts at local recovery:%
\begin{multline}
\Omega_{A_{1}^{k}\cdots A_{l}^{k}A_{1}^{\prime}\cdots A_{l}^{\prime}}\equiv\\
\left(  \left(  \mathcal{R}_{A_{1}^{\prime}\rightarrow A_{1,k}A_{1}^{\prime}%
}^{1}\circ\cdots\circ\mathcal{R}_{A_{1}^{\prime}\rightarrow A_{1,1}%
A_{1}^{\prime}}^{1}\right)  \otimes\cdots\otimes\left(  \mathcal{R}%
_{A_{l}^{\prime}\rightarrow A_{l,k}A_{l}^{\prime}}^{l}\circ\cdots
\circ\mathcal{R}_{A_{l}^{\prime}\rightarrow A_{l,1}A_{l}^{\prime}}^{l}\right)
\right)  \left(  \rho_{A_{1}A_{1}^{\prime}\cdots A_{l}A_{l}^{\prime}}\right)
.
\end{multline}
From (\ref{eq:starting-ineq-mult}), the triangle inequality, monotonicity of
the trace distance under quantum operations, and the fact that the recovery
maps $\mathcal{R}_{A_{1}^{\prime}\rightarrow A_{1}A_{1}^{\prime}}^{1}$,
\ldots, $\mathcal{R}_{A_{l}^{\prime}\rightarrow A_{l}A_{l}^{\prime}}^{l}$
commute with each other because they act on different systems, we can conclude
that all of the following inequalities hold%
\begin{equation}
\left\Vert \rho_{A_{1}\cdots A_{l}}-\Omega_{A_{1,x_{1}}\cdots A_{l,x_{l}}%
}\right\Vert _{1}\leq lk\cdot\delta_{\rho}\label{eq:k-ext-ineqs}%
\end{equation}
for all tuples $\left(  x_{1},\ldots,x_{l}\right)  $ where $x_{i}\in\left\{
1,\ldots,k\right\}  $ and $i\in\left\{  1,\ldots,l\right\}  $. We can then
symmetrize the systems $A_{i}^{k}$ according to the random permutation:%
\begin{equation}
\overline{\Pi}_{A_{i}^{k}}\left(  \cdot\right)  \equiv\frac{1}{k!}\sum_{\pi\in
S_{k}}W_{A_{i,1}\cdots A_{i,k}}^{\pi}\left(  \cdot\right)  \left(
W_{A_{i,1}\cdots A_{i,k}}^{\pi}\right)  ^{\dag},\label{eq:randomizing-channel}%
\end{equation}
where $W_{A_{i,1}\cdots A_{i,k}}^{\pi}$ is a unitary representation of the
permutation $\pi$ which acts on the $k$-partite space $\mathcal{H}_{A_{i,1}%
}\otimes\cdots\otimes\mathcal{H}_{A_{i,k}}$\ as%
\begin{equation}
W_{A_{i,1}\cdots A_{i,k}}^{\pi}\left\vert m_{1}\right\rangle _{A_{i,1}}%
\otimes\cdots\otimes\left\vert m_{k}\right\rangle _{A_{i,k}}=\left\vert
m_{\pi^{-1}(1)}\right\rangle _{A_{i,1}}\otimes\cdots\otimes\left\vert
m_{\pi^{-1}(k)}\right\rangle _{A_{i,k}}.
\end{equation}
This leads to the multipartite extension state:%
\begin{equation}
\overline{\Omega}_{A_{1}^{k}\cdots A_{l}^{k}}\equiv\left(  \overline{\Pi
}_{A_{1}^{k}}\otimes\cdots\otimes\overline{\Pi}_{A_{l}^{k}}\otimes
\text{Tr}_{A_{1}^{\prime}\cdots A_{l}^{\prime}}\right)  \left(  \Omega
_{A_{1}^{k}\cdots A_{l}^{k}A_{1}^{\prime}\cdots A_{l}^{\prime}}\right)  .
\end{equation}
Combining convexity of the trace norm with the inequalities in
(\ref{eq:k-ext-ineqs}) gives the following inequality:%
\begin{equation}
\left\Vert \rho_{A_{1}\cdots A_{l}}-\overline{\Omega}_{A_{1,1}A_{2,1}\cdots
A_{l,1}}\right\Vert _{1}\leq lk\cdot\delta_{\rho},\label{eq:1st-one}%
\end{equation}
quantifying the distance between $\rho_{A_{1}\cdots A_{l}}$\ and the set of
multipartite $k$-extendible states \cite{W89a,DPS05}. By applying
Proposition~\ref{prop:multi-de-finetti} from the appendix, we know that%
\begin{equation}
\left\Vert \overline{\Omega}_{A_{1,1}A_{2,1}\cdots A_{l,1}}-\text{SEP}\left(
A_{1}:\cdots:A_{l}\right)  \right\Vert _{1}\leq\frac{2}{k}\left(  \sum
_{i=2}^{l}\left\vert A_{i}\right\vert ^{2}\right)  .\label{eq:2nd-one}%
\end{equation}
By choosing%
\begin{equation}
k=\left\lfloor \sqrt{\frac{2}{\delta_{\rho}}}\left(  \sum_{i=2}^{l}\left\vert
A_{i}\right\vert ^{2}\right)  ^{1/2}\right\rfloor ,
\end{equation}
and combining (\ref{eq:1st-one}) and (\ref{eq:2nd-one}) with the triangle
inequality, we find that%
\begin{align}
\left\Vert \rho_{A_{1}\cdots A_{l}}-\text{SEP}\left(  A_{1}:\cdots
:A_{l}\right)  \right\Vert _{1}  & \leq\left(  l+1\right)  \left(  \sum
_{i=2}^{l}\left\vert A_{i}\right\vert ^{2}\right)  ^{1/2}\sqrt{2\delta_{\rho}%
}\\
& =2\left(  l+1\right)  \left(  \sum_{i=2}^{l}\left\vert A_{i}\right\vert
^{2}\right)  ^{1/2}\sqrt[4]{\varepsilon_{\rho}}.
\end{align}
Since the inequality holds independently of the particular extension
$\rho_{A_{1}A_{1}^{\prime}\cdots A_{l}A_{l}^{\prime}}$, we can rearrange it
and take an infimum over all such extensions to find that%
\begin{equation}
E_{I}\left(  A_{1}:\cdots:A_{l}\right)  _{\rho}\geq\frac{1}{16\cdot\left(
l+1\right)  ^{4}}\left(  \sum_{i=2}^{l}\left\vert A_{i}\right\vert
^{2}\right)  ^{-2}\left\Vert \rho_{A_{1}\cdots A_{l}}-\text{SEP}\left(
A_{1}:\cdots:A_{l}\right)  \right\Vert _{1}^{4}.
\end{equation}

\end{proof}

\begin{remark}
The above approach follows that given by Li and Winter in
\cite{Winterconj,LW14}. The appendix of \cite{LW14} sketches an approach for
the multipartite squashed entanglement (the definition of which is recalled in
the next section)\ but remarked that there were difficulties in completing the
proof because in this case the local recovery map acts on the same extension
system and it is not clear whether inequalities like those in
\eqref{eq:k-ext-ineqs} would hold. This difficulty is removed in our setting
here (for the CEMI) because the local recovery maps act on different
subsystems of the extension system. It still remains an open question to
establish faithfulness of the multipartite squashed entanglement.
\end{remark}

\subsection{CEMI\ is an upper bound on multipartite squashed entanglement}

\label{sec:CEMI-upper-bound}The conditional multipartite information of
$\sigma_{A_{1}\cdots A_{l}E}$ is defined as%
\begin{equation}
I\left(  A_{1}:\cdots:A_{l}|E\right)  _{\sigma}\equiv H\left(  A_{1}|E\right)
_{\sigma}+\cdots H\left(  A_{l}|E\right)  _{\sigma}-H\left(  A_{1}\cdots
A_{l}|E\right)  _{\sigma}.
\end{equation}
From this, one can define the multipartite squashed entanglement of a state
$\rho_{A_{1}\cdots A_{l}}$\ as \cite{YHHHOS09,AHS08}:%
\begin{equation}
E_{\text{sq}}\left(  A_{1}:\cdots:A_{l}\right)  _{\rho}\equiv\frac{1}{2}%
\inf_{\rho_{A_{1}\cdots A_{l}E}}I\left(  A_{1}:\cdots:A_{l}|E\right)  _{\rho},
\end{equation}
where the infimum is over all extensions $\rho_{A_{1}\cdots A_{l}E}$ of
$\rho_{A_{1}\cdots A_{l}}$. The following proposition generalizes
Proposition~3 of \cite{YHW07}\ to the multipartite setting but is however
implicit in their concluding statement \textquotedblleft All conclusions for
the bipartite case can be similarly deduced\textquotedblright. (Nevertheless,
it seems worthwhile to produce a short explicit proof.)

\begin{proposition}
The multipartite squashed entanglement $E_{\operatorname{sq}}\left(
A_{1}:\cdots:A_{l}\right)  _{\rho}$ is never larger than the CEMI
$E_{I}\left(  A_{1}:\cdots:A_{l}\right)  _{\rho}$:%
\begin{equation}
E_{\operatorname{sq}}\left(  A_{1}:\cdots:A_{l}\right)  _{\rho}\leq
E_{I}\left(  A_{1}:\cdots:A_{l}\right)  _{\rho}. \label{eq:CEMI-squash}%
\end{equation}

\end{proposition}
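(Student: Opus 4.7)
The plan is to produce, from any CEMI extension, a valid squashed-entanglement extension whose conditional multipartite information is no larger than the CEMI integrand, and then take infima. Concretely, given an extension $\rho_{A_{1}A_{1}^{\prime}\cdots A_{l}A_{l}^{\prime}}$ of $\rho_{A_{1}\cdots A_{l}}$, I would use the single composite system $E\equiv A_{1}^{\prime}\cdots A_{l}^{\prime}$ as the squashing system. This is admissible because tracing out $E$ returns $\rho_{A_{1}\cdots A_{l}}$.

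Next I would establish the pointwise inequality
\[
I\bigl(A_{1}:\cdots:A_{l}\,\big|\,A_{1}^{\prime}\cdots A_{l}^{\prime}\bigr)_{\rho}
\;\leq\;
I\bigl(A_{1}A_{1}^{\prime}:\cdots:A_{l}A_{l}^{\prime}\bigr)_{\rho}
-I\bigl(A_{1}^{\prime}:\cdots:A_{l}^{\prime}\bigr)_{\rho}.
\]
To do this I would expand both sides with the chain rule $H(XY)-H(Y)=H(X|Y)$. The right-hand side becomes $\sum_{i=1}^{l}H(A_{i}|A_{i}^{\prime})_{\rho}-H(A_{1}\cdots A_{l}|A_{1}^{\prime}\cdots A_{l}^{\prime})_{\rho}$ (this is essentially the computation already carried out in the second line of the proof of Lemma~\ref{lem:chain-for-gap}). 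The left-hand side, by definition, equals $\sum_{i=1}^{l}H(A_{i}|A_{1}^{\prime}\cdots A_{l}^{\prime})_{\rho}-H(A_{1}\cdots A_{l}|A_{1}^{\prime}\cdots A_{l}^{\prime})_{\rho}$. The difference collapses to
\[
\sum_{i=1}^{l}\bigl[H(A_{i}|A_{i}^{\prime})_{\rho}-H(A_{i}|A_{1}^{\prime}\cdots A_{l}^{\prime})_{\rho}\bigr]
\;=\;\sum_{i=1}^{l}I\bigl(A_{i};A_{[l]\setminus\{i\}}^{\prime}\,\big|\,A_{i}^{\prime}\bigr)_{\rho}\;\geq\;0,
\]
where the inequality is $l$ applications of strong subadditivity (conditioning on more systems never increases the conditional entropy).

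Finally, I would take the infimum of the left-hand side of the pointwise inequality over \emph{all} extensions of $\rho_{A_{1}\cdots A_{l}}$ (the squashed-entanglement infimum is over a strictly larger set than those of the "doubled" product form $A_{1}A_{1}^{\prime}\cdots A_{l}A_{l}^{\prime}$, so taking its infimum only makes the bound stronger), and the infimum of the right-hand side over all CEMI extensions. Multiplying by $1/2$ on both sides gives \eqref{eq:CEMI-squash}.

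There is no serious obstacle here: the argument is a one-line consequence of SSA once one picks the correct squashing system. The only conceptual point worth flagging is that the CEMI uses $l$ separate "extension" registers $A_{i}^{\prime}$ while the squashed entanglement uses a single register $E$; treating the concatenation $E=A_{1}^{\prime}\cdots A_{l}^{\prime}$ as one system is what makes the comparison possible, and this is precisely the asymmetry that makes CEMI potentially larger (and is also the source of the nontrivial $\sum_{i}I(A_{i};A_{[l]\setminus\{i\}}^{\prime}|A_{i}^{\prime})$ gap between the two quantities).
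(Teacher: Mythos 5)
Your proof is correct and follows essentially the same route as the paper: both choose the composite system $A_{1}^{\prime}\cdots A_{l}^{\prime}$ as the squashing system and show that the CEMI integrand exceeds $I(A_{1}:\cdots:A_{l}|A_{1}^{\prime}\cdots A_{l}^{\prime})_{\rho}$ by exactly the non-negative remainder $\sum_{i=1}^{l}I(A_{i};A_{[l]\setminus\{i\}}^{\prime}|A_{i}^{\prime})$. The paper organizes the computation by way of Lemma~\ref{lem:chain-for-gap} and the chain rule for conditional mutual information rather than your direct conditional-entropy expansion, but the dropped terms and the appeal to strong subadditivity are identical.
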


\begin{proof}
Consider that \cite{YHHHOS09}%
\begin{equation}
I\left(  A_{1}:\cdots:A_{l}|E\right)  _{\rho}=\sum_{i=1}^{l}I\left(
A_{i};A_{1}^{i-1}|E\right)  .
\end{equation}
While, from Lemma~\ref{lem:chain-for-gap}, an additional application of the
chain rule, and strong subadditivity, we have that%
\begin{align}
I\left(  A_{1}A_{1}^{\prime}:\cdots:A_{l}A_{l}^{\prime}\right)  _{\rho
}-I\left(  A_{1}^{\prime}:\cdots:A_{l}^{\prime}\right)  _{\rho}  &
=\sum_{i=1}^{l}I\left(  A_{i};A_{1}^{i-1}A_{\left[  l\right]  \backslash
\left\{  i\right\}  }^{\prime}\middle|A_{i}^{\prime}\right) \\
&  =\sum_{i=1}^{l}I\left(  A_{i};A_{1}^{i-1}\middle|A_{\left[  l\right]
\backslash\left\{  i\right\}  }^{\prime}A_{i}^{\prime}\right)  +I\left(
A_{i};A_{\left[  l\right]  \backslash\left\{  i\right\}  }^{\prime
}\middle|A_{i}^{\prime}\right) \\
&  =\sum_{i=1}^{l}I\left(  A_{i};A_{1}^{i-1}|A_{1}^{\prime}\cdots
A_{l}^{\prime}\right)  +I\left(  A_{i};A_{\left[  l\right]  \backslash\left\{
i\right\}  }^{\prime}\middle|A_{i}^{\prime}\right) \\
&  \geq\sum_{i=1}^{l}I\left(  A_{i};A_{1}^{i-1}|A_{1}^{\prime}\cdots
A_{l}^{\prime}\right) \\
&  =I\left(  A_{1}:\cdots:A_{l}|A_{1}^{\prime}\cdots A_{l}^{\prime}\right)
_{\rho}\\
&  \geq E_{\operatorname{sq}}\left(  A_{1}:\cdots:A_{l}\right)  _{\rho}.
\end{align}
Since the above chain holds independently of the particular extension, this
establishes (\ref{eq:CEMI-squash}).
\end{proof}

\section{Partial state distribution and operational interpretations}

\label{sec:PSD}In this section, we review the partial state distribution
protocol from \cite{YHW08}\ and discuss how it gives an operational
interpretation for the MSQ\ discord (Yang \textit{et al}.~already observed
that the protocol gives an operational interpretation of the CEMI
\cite{YHW08}). The review in this section also serves to prepare for the
result and discussion given in Section~\ref{sec:potential-improve}. Along the
way, we also establish optimality for the total quantum communication rate of
the partial state distribution protocol.

The core protocol underlying partial state distribution is point-to-point
quantum state redistribution (QSR) \cite{DY08,YD09}, so we begin by briefly
reviewing that. Recall that the QSR protocol applies to many copies of a
four-party pure state $\psi_{JKLM}$. That is, suppose that a reference
possesses the $J$ systems, a sender systems $KL$, and a receiver systems $M$.
The goal is to transfer the $K$ systems to the receiver using as few noiseless
qubit channels and as little entanglement as possible while maximizing the
fidelity of the reproduced state. This transfer can happen perfectly in the
asymptotic limit of many copies as long as the rate of quantum communication
is at least $\frac{1}{2}I\left(  K;J|M\right)  _{\psi}$, which is half the
conditional mutual information evaluated with respect to a single copy of
$\psi$. That is, the main result of \cite{DY08,YD09} is that there exists a
sequence of encodings $\mathcal{E}_{K^{n}L^{n}X_{n}\rightarrow L^{n}G_{n}}$
and decodings $\mathcal{D}_{G_{n}Y_{n}M^{n}\rightarrow K^{n}M^{n}}$ such that%
\begin{equation}
\lim_{n\rightarrow\infty}\left\Vert \left(  \mathcal{D}_{G_{n}Y_{n}%
M^{n}\rightarrow K^{n}M^{n}}\circ\mathcal{E}_{K^{n}L^{n}X_{n}\rightarrow
L^{n}G_{n}}\right)  \left(  \psi_{JKLM}^{\otimes n}\otimes\Phi_{X_{n}Y_{n}%
}\right)  -\psi_{JKLM}^{\otimes n}\right\Vert _{1}=0,
\label{eq:state-redist-performance}%
\end{equation}
where $\Phi_{X_{n}Y_{n}}$ is a maximally entangled state and%
\begin{equation}
\lim_{n\rightarrow\infty}\frac{1}{n}\log\dim\left(  G_{n}\right)  =\frac{1}%
{2}I\left(  K;J|M\right)  _{\psi}.
\end{equation}
The QSR protocol can also generate entanglement, but we are not concerned with
this aspect in what follows.

For convenience of presentation, we will review the partial state distribution
protocol for the case of four parties (one central sender and three
receivers), with it being clear how to extend the idea to more parties. Let
the state of interest be $\rho_{A_{1}A_{1}^{\prime}A_{2}A_{2}^{\prime}%
A_{3}A_{3}^{\prime}}$ and let $\phi_{A_{1}A_{1}^{\prime}A_{2}A_{2}^{\prime
}A_{3}A_{3}^{\prime}R}$ be a purification of it. Partial state distribution
begins with the central sender possessing the systems $RA_{1}A_{2}A_{3}$,
Receiver~1 system $A_{1}^{\prime}$, Receiver~2 system $A_{2}^{\prime}$, and
Receiver~3 system $A_{3}^{\prime}$. We assume that the central sender shares
unlimited entanglement with each of the receivers before communication begins.
The partial state distribution protocol gives an operational interpretation of
the information quantity%
\begin{equation}
I\left(  A_{1}A_{1}^{\prime}:A_{2}A_{2}^{\prime}:A_{3}A_{3}^{\prime}\right)
_{\rho}-I\left(  A_{1}^{\prime}:A_{2}^{\prime}:A_{3}^{\prime}\right)  _{\rho
}\label{eq:PSD-info-gap}%
\end{equation}
as twice the total rate of quantum communication needed by the central sender
in order to transfer the system $A_{1}$ to Receiver$~$1, $A_{2}$ to
Receiver~2, and $A_{3}$ to Receiver~3. In order to see this, consider that
Lemma~\ref{lem:chain-for-gap}\ gives the following expansion:%
\begin{multline}
I\left(  A_{1}A_{1}^{\prime}:A_{2}A_{2}^{\prime}:A_{3}A_{3}^{\prime}\right)
_{\rho}-I\left(  A_{1}^{\prime}:A_{2}^{\prime}:A_{3}^{\prime}\right)  _{\rho
}\label{eq:expand-three}\\
=I\left(  A_{3}:A_{1}A_{2}A_{1}^{\prime}A_{2}^{\prime}|A_{3}^{\prime}\right)
_{\rho}+I\left(  A_{2}:A_{1}A_{1}^{\prime}A_{3}^{\prime}|A_{3}^{\prime
}\right)  _{\rho}+I\left(  A_{1}:A_{2}^{\prime}A_{3}^{\prime}|A_{1}^{\prime
}\right)  _{\rho}%
\end{multline}
This suggests that we can perform the QSR protocol three times. Indeed, the
partial state distribution protocol proceeds as follows and as depicted in
Figure~\ref{fig:PSD}:%
\begin{figure}
[ptb]
\begin{center}
\includegraphics[
natheight=5.027200in,
natwidth=12.480100in,
height=2.4535in,
width=6.0502in
]%
{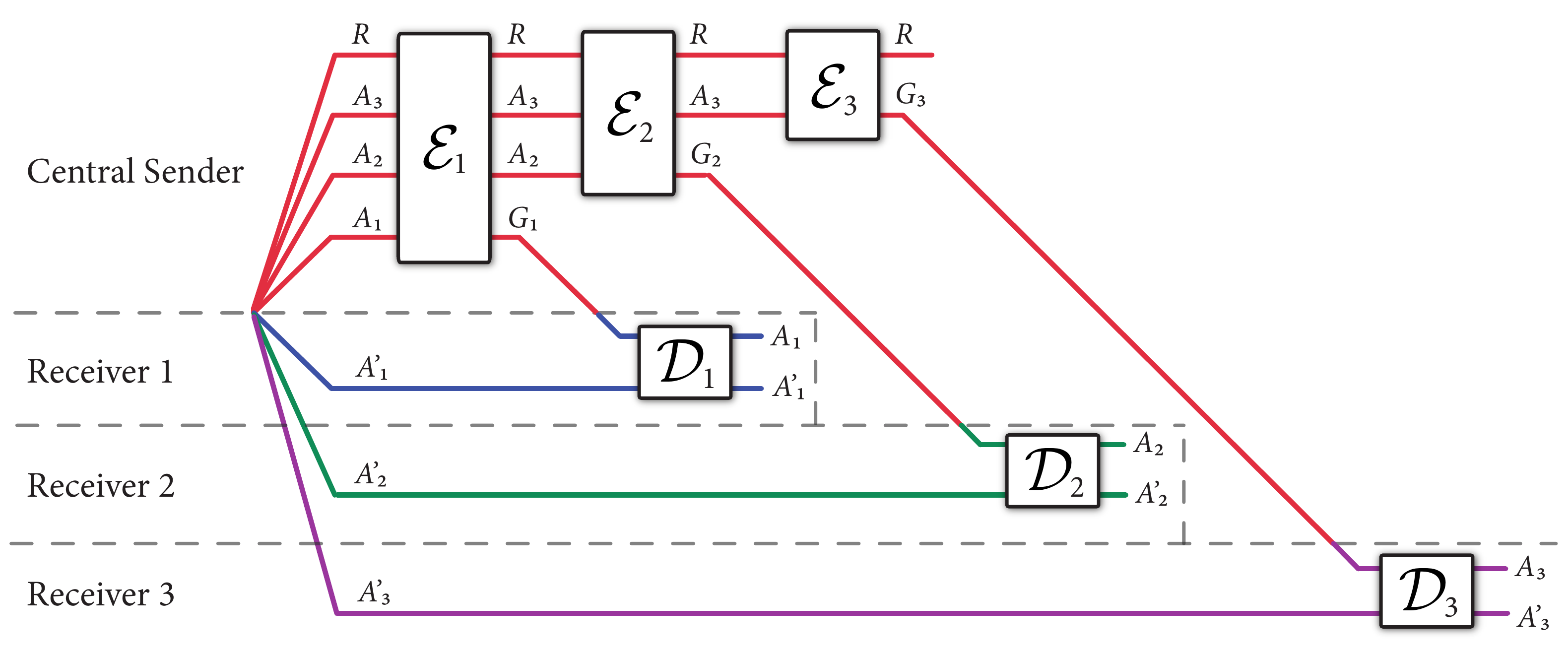}%
\caption{\textbf{Partial state distribution \cite{YHW08}.} The partial state
distribution protocol is a way for a central sender to communicate various
systems to receivers possessing quantum side information. The protocol makes
use of quantum state redistribution for each round (in the figure, for
simplicity, we do not depict the entanglement consumed or generated by the
protocol and it is also implicit that the protocol acts on many copies of the
initial state). The total quantum communication cost is given by $I\left(
A_{1}A_{1}^{\prime}:A_{2}A_{2}^{\prime}:A_{3}A_{3}^{\prime}\right)  -I\left(
A_{1}^{\prime}:A_{2}^{\prime}:A_{3}^{\prime}\right)  $. Observe that the
decodings can proceed in any order. The generalization of this protocol to
more parties is straightforward.}%
\label{fig:PSD}%
\end{center}
\end{figure}

\begin{enumerate}
\item The first round corresponds to the term $I\left(  A_{1}:A_{2}^{\prime
}A_{3}^{\prime}|A_{1}^{\prime}\right)  _{\rho}$. The central sender begins
with systems $RA_{1}A_{2}A_{3}$, Receiver~1 has system $A_{1}^{\prime}$, and
the other receivers have systems $A_{2}^{\prime}A_{3}^{\prime}$ (which play
the role of reference systems in the point-to-point QSR\ protocol). The sender
acts with an encoding $\mathcal{E}_{R^{n}A_{1}^{n}A_{2}^{n}A_{3}^{n}X_{n}%
^{1}\rightarrow R^{n}A_{2}^{n}A_{3}^{n}G_{n}^{1}}^{1}$ and transmits system
$G_{n}^{1}$ to the receiver. Receiver~1 then performs a decoding
$\mathcal{D}_{G_{n}^{1}Y_{n}^{1}A_{1}^{\prime n}\rightarrow A_{1}^{n}%
A_{1}^{\prime n}}^{1}$ to recover the $A_{1}^{\prime}$ systems.

\item The second round corresponds to $I\left(  A_{2}:A_{1}A_{1}^{\prime}%
A_{3}^{\prime}|A_{3}^{\prime}\right)  _{\rho}$. The central sender begins with
systems $RA_{2}A_{3}$, Receiver~2 has system $A_{2}^{\prime}$, and the other
receivers have systems $A_{1}A_{1}^{\prime}A_{3}^{\prime}$ (these systems now
play the role of reference systems in the point-to-point QSR\ protocol). The
central sender acts with an encoding $\mathcal{E}_{R^{n}A_{2}^{n}A_{3}%
^{n}X_{n}^{2}\rightarrow R^{n}A_{3}^{n}G_{n}^{2}}^{2}$ and transmits system
$G_{n}^{2}$ to the receiver. Receiver~2 then performs a decoding
$\mathcal{D}_{G_{n}^{2}Y_{n}^{2}A_{2}^{\prime n}\rightarrow A_{2}^{n}%
A_{2}^{\prime n}}^{2}$ to recover the $A_{2}^{\prime}$ systems.

\item The third round corresponds to $I\left(  A_{3}:A_{1}A_{2}A_{1}^{\prime
}A_{2}^{\prime}|A_{3}^{\prime}\right)  _{\rho}$. The central sender begins
with systems $RA_{3}$, Receiver~3 has system $A_{3}^{\prime}$, and the other
receivers have systems $A_{1}A_{1}^{\prime}A_{2}A_{2}^{\prime}$ (these now
playing the role of reference systems in the point-to-point QSR\ protocol).
The central sender acts with an encoding $\mathcal{E}_{R^{n}A_{3}^{n}X_{n}%
^{3}\rightarrow R^{n}G_{n}^{3}}^{3}$ and transmits system $G_{n}^{3}$ to
Receiver~3. Receiver~3 then performs a decoding $\mathcal{D}_{G_{n}^{3}%
Y_{n}^{3}A_{3}^{\prime n}\rightarrow A_{3}^{n}A_{3}^{\prime n}}$ to recover
the $A_{3}^{\prime}$ systems.
\end{enumerate}

Since all three protocols perform perfectly in the asymptotic limit, by
exploiting the triangle inequality with (\ref{eq:state-redist-performance})
three times, we find that%
\begin{equation}
\lim_{n\rightarrow\infty}\left\Vert \left(  \mathcal{D}_{n}^{3}\circ
\mathcal{E}_{n}^{3}\circ\mathcal{D}_{n}^{2}\circ\mathcal{E}_{n}^{2}%
\circ\mathcal{D}_{n}^{1}\circ\mathcal{E}_{n}^{1}\right)  \left(  \phi
_{A_{1}A_{1}^{\prime}A_{2}A_{2}^{\prime}A_{3}A_{3}^{\prime}R}^{\otimes
n}\otimes\bigotimes\limits_{i=1}^{3}\Phi_{X_{n}^{i}Y_{n}^{i}}\right)
-\phi_{A_{1}A_{1}^{\prime}A_{2}A_{2}^{\prime}A_{3}A_{3}^{\prime}R}^{\otimes
n}\right\Vert _{1}=0,
\end{equation}
with%
\begin{multline}
2\lim_{n\rightarrow\infty}\frac{1}{n}\log\left(  \dim\left(  G_{n}^{1}\right)
\dim\left(  G_{n}^{2}\right)  \dim\left(  G_{n}^{3}\right)  \right)
\label{eq:register-size}\\
=I\left(  A_{3}:A_{1}A_{2}A_{1}^{\prime}A_{2}^{\prime}|A_{3}^{\prime}\right)
_{\rho}+I\left(  A_{2}:A_{1}A_{1}^{\prime}A_{3}^{\prime}|A_{3}^{\prime
}\right)  _{\rho}+I\left(  A_{1}:A_{2}^{\prime}A_{3}^{\prime}|A_{1}^{\prime
}\right)  _{\rho}%
\end{multline}
Due to the nature of this protocol, observe that we can commute all of the
decoding maps to the end and each of these decodings commute with each other
since they act on different spaces. That is, we have that%
\begin{equation}
\lim_{n\rightarrow\infty}\left\Vert \left(  \left(  \mathcal{D}_{n}^{3}%
\otimes\mathcal{D}_{n}^{2}\otimes\mathcal{D}_{n}^{1}\right)  \circ
\mathcal{E}_{n}^{3}\circ\mathcal{E}_{n}^{2}\circ\mathcal{E}_{n}^{1}\right)
\left(  \phi_{A_{1}A_{1}^{\prime}A_{2}A_{2}^{\prime}A_{3}A_{3}^{\prime}%
R}^{\otimes n}\otimes\bigotimes\limits_{i=1}^{3}\Phi_{X_{n}^{i}Y_{n}^{i}%
}\right)  -\phi_{A_{1}A_{1}^{\prime}A_{2}A_{2}^{\prime}A_{3}A_{3}^{\prime}%
R}^{\otimes n}\right\Vert _{1}=0. \label{eq:state-redist-concat}%
\end{equation}
(We cannot however commute the encodings with each other.)

An interesting observation from \cite{YHW08}\ is that the information quantity
in (\ref{eq:expand-three}) is conservative, corresponding to the different
expansions in Lemma~\ref{lem:chain-for-gap}\ and, operationally, to the fact
that we can perform the partial state distribution protocol in any order (we
would however require different encodings and decodings in order to do so).
Also, \cite{YHW08}\ interpreted the CEMI\ in terms of the partial state
distribution protocol as the total rate of quantum communication needed to
transfer the systems $A_{1}$ through $A_{3}$ to independent receivers who
possess the best possible quantum side information in the form of extension
systems $A_{1}^{\prime}$, $A_{2}^{\prime}$, and $A_{3}^{\prime}$, generalizing
the squashed entanglement interpretation from \cite{O08}\ to the multipartite setting.

\subsection{Optimality}

The optimality of the total quantum communication rate in partial state
distribution was not discussed in \cite{YHW08}, but it follows from a simple
argument that exploits the structure of any protocol for partial state
distribution and a few salient properties of the multipartite information. A
proof proceeds similarly to \cite[Theorem~13]{WDHW13}. Indeed, any general
protocol for partial state distribution has the form given in
Figure~\ref{fig:PSD}, with the exception that the encoder can be taken as just
one CPTP\ linear map from the input systems $R^{n}A_{1}^{n}A_{2}^{n}A_{3}%
^{n}X_{1}^{n}X_{2}^{n}X_{3}^{n}$ to the systems $R^{n}G_{1}^{n}G_{2}^{n}%
G_{3}^{n}$. Let $\sigma$ denote the global state after the encoder acts. A
protocol for partial state distribution has a final state $\omega$\ after the
local decodings which is $\varepsilon$-close in trace distance to the ideal
i.i.d.~state $\phi_{A_{1}A_{1}^{\prime}A_{2}A_{2}^{\prime}A_{3}A_{3}^{\prime
}R}^{\otimes n}$. So we proceed with the following chain of inequalities%
\begin{align}
nI\left(  A_{1}A_{1}^{\prime}:A_{2}A_{2}^{\prime}:A_{3}A_{3}^{\prime}\right)
_{\phi}  & =I\left(  A_{1}^{n}A_{1}^{\prime n}:A_{2}^{n}A_{2}^{\prime n}%
:A_{3}^{n}A_{3}^{\prime n}\right)  _{\phi^{\otimes n}}\\
& \leq I\left(  A_{1}^{n}A_{1}^{\prime n}:A_{2}^{n}A_{2}^{\prime n}:A_{3}%
^{n}A_{3}^{\prime n}\right)  _{\omega}+f\left(  \varepsilon\right)  .
\end{align}
The first equality is from the additivity of the multipartite information on
tensor-power states and the inequality follows from the assumption that
$\omega$ is $\varepsilon$-close to the ideal state and by applying the
Fannes-Audenaert inequality \cite{A07}\ with $f\left(  \varepsilon\right)  $ a
function with the property that $\lim_{\varepsilon\rightarrow0}\lim
_{n\rightarrow\infty}\frac{1}{n}f\left(  \varepsilon\right)  =0$. Continuing,
we have that%
\begin{align}
I\left(  A_{1}^{n}A_{1}^{\prime n}:A_{2}^{n}A_{2}^{\prime n}:A_{3}^{n}%
A_{3}^{\prime n}\right)  _{\omega}  & \leq I\left(  A_{1}^{\prime
n}G_{1}^{n}Y_{1}^{n}:A_{2}^{\prime n}G_{2}^{n}Y_{2}^{n}:A_{3}^{\prime n}G_{3}^{n}
Y_{3}^{n}\right)  _{\sigma}\\
& \leq I\left(  A_{1}^{\prime n}Y_{1}^{n}:A_{2}^{\prime n}Y_{2}^{n}%
:A_{3}^{\prime n}Y_{3}^{n}\right)  _{\sigma}+2\log\left(  \left\vert G_{1}%
^{n}\right\vert \left\vert G_{2}^{n}\right\vert \left\vert G_{3}%
^{n}\right\vert \right)  \\
& =I\left(  A_{1}^{\prime n}:A_{2}^{\prime n}:A_{3}^{\prime n}\right)
_{\sigma}+2\log\left(  \left\vert G_{1}^{n}\right\vert \left\vert G_{2}%
^{n}\right\vert \left\vert G_{3}^{n}\right\vert \right)  \label{eq:third-eq}%
\\
& =nI\left(  A_{1}^{\prime}:A_{2}^{\prime}:A_{3}^{\prime}\right)  _{\phi
}+2\log\left(  \left\vert G_{1}^{n}\right\vert \left\vert G_{2}^{n}\right\vert
\left\vert G_{3}^{n}\right\vert \right)  ,
\end{align}
where the first inequality follows from quantum data processing (the local
decoders can only decrease the multipartite information). The second
inequality follows from%
\begin{align}
I\left(  C_{1}D_{1}:\cdots:C_{l}D_{l}\right)    & =\sum_{i=1}^{l}H\left(
C_{i}D_{i}\right)  -H\left(  C_{1}D_{1}\cdots C_{l}D_{l}\right)  \\
& =\sum_{i=1}^{l}H\left(  C_{i}\right)  +H\left(  D_{i}|C_{i}\right)
-H\left(  C_{1}\cdots C_{l}\right)  -H\left(  D_{1}\cdots D_{l}|C_{1}\cdots
C_{l}\right)  \\
& \leq\sum_{i=1}^{l}H\left(  C_{i}\right)  -H\left(  C_{1}\cdots C_{l}\right)
+2\log\left(  \left\vert D_{1}\right\vert \times\cdots\times\left\vert
D_{l}\right\vert \right)  \\
& =I\left(  C_{1}:\cdots:C_{l}\right)  +2\log\left(  \left\vert D_{1}%
\right\vert \times\cdots\times\left\vert D_{l}\right\vert \right)
\end{align}
The equality in (\ref{eq:third-eq}) follows because the systems are product
with respect to the cut $A_{1}^{\prime n}A_{2}^{\prime n}A_{3}^{\prime
n}|Y_{1}^{n}|Y_{2}^{n}|Y_{3}^{n}$. The final equality is again additivity.
Putting everything together we find that%
\begin{equation}
\frac{1}{2}\left[  I\left(  A_{1}A_{1}^{\prime}:A_{2}A_{2}^{\prime}:A_{3}%
A_{3}^{\prime}\right)  _{\phi}-I\left(  A_{1}^{\prime}:A_{2}^{\prime}%
:A_{3}^{\prime}\right)  _{\phi}\right]  \leq\frac{1}{n}\log\left(  \left\vert
G_{1}^{n}\right\vert \left\vert G_{2}^{n}\right\vert \left\vert G_{3}%
^{n}\right\vert \right)  +\frac{1}{n}f\left(  \varepsilon\right)  .
\end{equation}
Taking the limit as $n\rightarrow\infty$ and $\varepsilon\rightarrow0$ then
establishes the information gap in (\ref{eq:PSD-info-gap}) as twice the
minimum total rate of quantum communication needed in any partial state
distribution protocol. This analysis clearly extends to any finite number of parties.

\subsection{Operational interpretation of the MSQ\ discord}

\label{sec:op-int}The partial state distribution protocol gives a compelling
operational interpretation of the MSQ\ discord, different from and arguably
simpler than those considered in previous contexts \cite{CABMPW11,MD11}.
Suppose that we have a multipartite state $\rho_{A_{1}\cdots A_{l}}$ shared by
$l$ local parties, each of whom possesses system $A_{i}$ where $i\in\left\{
1,\ldots,l\right\}  $. Let $\phi_{RA_{1}\cdots A_{l}}$ be a state which
purifies $\rho_{A_{1}\cdots A_{l}}$, where $R$ is an environment system
inaccessible to the local parties. Suppose now that a measurement occurs on
each of the systems, according to the measurement maps $\mathcal{M}%
_{A_{1}\rightarrow X_{1}}^{1},\ldots,\mathcal{M}_{A_{l}\rightarrow X_{l}}^{l}%
$, producing the state $\omega_{RX_{1}\cdots X_{l}}$:%
\begin{equation}
\omega_{RX_{1}\cdots X_{l}}\equiv\left(  \mathcal{M}_{A_{1}\rightarrow X_{1}%
}^{1}\otimes\cdots\otimes\mathcal{M}_{A_{l}\rightarrow X_{l}}^{l}\right)
\left(  \phi_{RA_{1}\cdots A_{l}}\right)  .
\end{equation}
A measurement corresponds to a loss of information, and one way to represent
this is with isometric extensions of the measurement process, so that the full
state is%
\begin{equation}
\omega_{RX_{1}E_{1}\cdots X_{l}E_{l}}\equiv\left(  \mathcal{U}_{A_{1}%
\rightarrow X_{1}E_{1}}^{\mathcal{M}^{1}}\otimes\cdots\otimes\mathcal{U}%
_{A_{l}\rightarrow X_{l}E_{l}}^{\mathcal{M}^{l}}\right)  \left(  \phi
_{RA_{1}\cdots A_{l}}\right)  .
\end{equation}
and $\mathcal{U}_{A_{i}\rightarrow X_{i}E_{i}}^{\mathcal{M}^{i}}$ is an
isometric extension of the measurement map $\mathcal{M}_{A_{i}\rightarrow
X_{i}}^{i}$. Since the systems $E_{1}$, \ldots, $E_{l}$ are lost to the
environment after the measurement process, it becomes the case that the
environment possesses the systems $R$, $E_{1}$, \ldots, $E_{l}$, and each of
the local parties possesses one of the measurement outcomes.

With this setup, we can now see that the (unoptimized)\ MSQ\ discord%
\begin{equation}
I\left(  A_{1}:\cdots:A_{l}\right)  _{\rho}-I\left(  X_{1}:\cdots
:X_{l}\right)  _{\omega}=I\left(  X_{1}E_{1}:\cdots:X_{l}E_{l}\right)  _{\rho
}-I\left(  X_{1}:\cdots:X_{l}\right)  _{\omega}%
\end{equation}
is equal to the twice the total rate of quantum communication needed for the
environment to send the systems $E_{1}$, \ldots, $E_{l}$ back to each of the
local parties in order to restore the coherence lost in the measurement
processes. Due to the fact that the QSR\ protocol is dual under time reversal
\cite{DY08,YD09}, the unoptimized MSQ\ discord is also equal to the twice the
total rate of quantum communication needed by the local parties to transmit
the systems $E_{1}$, \ldots, $E_{l}$ back to the environment, thus
additionally characterizing the rate at which coherence is lost in the
measurement process. The (optimized)\ MSQ\ discord simply includes a further
optimization over the measurements themselves in order to minimize the total
quantum communication cost.\footnote{A subtle point here is that one could
more generally include an optimization over collective quantum measurements
acting on many copies of the state, which would result in a
\textit{regularized} MSQ\ discord being equal to the total quantum
communication cost.}

\subsection{Operational interpretation of the quantum discord}

We remark that this approach in terms of partial state distribution gives as a
special case a compelling operational interpretation of the original quantum
discord, again different from and arguably simpler than those considered
previously \cite{CABMPW11,MD11}. Indeed, consider a bipartite state $\rho
_{AB}$ and a measurement map $\mathcal{M}_{A\rightarrow X}$. The unoptimized
quantum discord is defined as%
\begin{align}
I\left(  A;B\right)  _{\rho}-I\left(  X;B\right)  _{\mathcal{M}\left(
\rho\right)  }  &  =I\left(  XE;B\right)  _{\mathcal{U}\left(  \rho\right)
}-I\left(  X;B\right)  _{\mathcal{U}\left(  \rho\right)  }\\
&  =I\left(  E;B|X\right)  _{\mathcal{U}\left(  \rho\right)  },
\end{align}
where the first equality follows because every measurement map has an
isometric extension $\mathcal{U}_{A\rightarrow XE}^{\mathcal{M}}$ and the
mutual information is invariant under local isometries. The second equality is
a consequence of the chain rule (this rewriting of discord in terms of
conditional mutual information was first explicitly given in \cite{P12}).
Purifying the original state with a reference system $R$, we have a pure state
on systems $REXB$. After the measurement occurs, it is natural to associate
the system $E$ as being \textquotedblleft lost\textquotedblright\ and thus
given to the other environment system $R$. That is, after the measurement
occurs, the systems $R$ and $E$ are with the environment, the system $X$ is
with a party who has the measurement outcome, and the system $B$ is with
another party who plays no role in the protocol. We can then readily see from
the QSR\ protocol that $I\left(  E;B|X\right)  _{\mathcal{U}\left(
\rho\right)  }$ is twice the rate of quantum communication needed in order to
transmit the system $E$ to the party possessing $X$ (assuming that the $B$
system is with a different party who does not play a role in this transfer).
We can thus interpret $I\left(  E;B|X\right)  _{\mathcal{U}\left(
\rho\right)  }$ as twice the quantum communication cost needed to restore the
coherence that was lost in the measurement process. After this transfer
occurs, $I\left(  E;B|X\right)  _{\mathcal{U}\left(  \rho\right)  }$ is also
equal to twice the quantum communication rate needed to send the system $E$
back to the environment (this is because state redistribution is dual under
time reversal \cite{DY08,YD09}). The quantity thus also characterizes the
amount of quantum information lost in the measurement process. Optimizing over
all measurements gives the optimized discord (keeping in mind that one could
potentially optimize over collective measurements and get a
\textit{regularized} discord).

\section{Potential improvement of the local recoverability inequality}

\label{sec:potential-improve}It might be possible to improve upon the local
recoverability inequality given in (\ref{eq:local-recover}). Here we provide
what might be a first step, which follows an approach recently given in
\cite{BHOS14}.

\begin{proposition}
\label{prop:new-lower}Let $\rho_{A_{1}A_{1}^{\prime}\cdots A_{l}A_{l}^{\prime
}}$ be a multipartite quantum state. Then the following inequality holds%
\begin{multline}
I\left(  A_{1}A_{1}^{\prime}:\cdots:A_{l}A_{l}^{\prime}\right)  _{\rho
}-I\left(  A_{1}^{\prime}:\cdots:A_{l}^{\prime}\right)  _{\rho}%
\label{eq:multipartite-recovery-regularized}\\
\geq\lim_{n\rightarrow\infty}\min_{\mathcal{R}^{1},\cdots\mathcal{R}^{l}}%
\frac{1}{n}D\left(  \rho_{A_{1}A_{1}^{\prime}\cdots A_{l}A_{l}^{\prime}%
}^{\otimes n}\Vert\mathcal{R}_{A_{1}^{\prime n}\rightarrow A_{1}^{n}%
A_{1}^{\prime n}}^{1}\otimes\cdots\otimes\mathcal{R}_{A_{l}^{\prime
n}\rightarrow A_{l}^{n}A_{l}^{\prime n}}^{l}\left(  \rho_{A_{1}^{\prime}\cdots
A_{l}^{\prime}}^{\otimes n}\right)  \right)  ,
\end{multline}
where $\mathcal{R}_{A_{1}^{\prime n}\rightarrow A_{1}^{n}A_{1}^{\prime n}}%
^{1}$, \ldots, $\mathcal{R}_{A_{l}^{\prime n}\rightarrow A_{l}^{n}%
A_{l}^{\prime n}}^{l}$ are a sequence of local recovery maps.
\end{proposition}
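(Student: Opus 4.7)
The plan is to combine the additive chain decomposition of Lemma~\ref{lem:chain-for-gap} with a regularized, relative-entropy version of the Fawzi-Renner inequality of the kind established in \cite{BHOS14}. As black-box input I would use the following regularized single-CMI bound: for any tripartite state $\rho_{ABC}$,
\begin{equation*}
I(A;B|C)_\rho \;\geq\; \lim_{n\to\infty}\,\min_{\mathcal{R}^{(n)}_{C^n\to A^n C^n}} \frac{1}{n}\, D\!\left(\rho^{\otimes n}_{ABC}\,\|\, \mathcal{R}^{(n)}(\rho^{\otimes n}_{BC})\right),
\end{equation*}
which follows from the measured-relative-entropy variant of Fawzi-Renner in \cite{BHOS14} together with the Hiai-Petz regularization $\lim_{n}\frac{1}{n}D_{M}(\sigma^{\otimes n}\|\tau^{\otimes n})=D(\sigma\|\tau)$.

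First, I would apply Lemma~\ref{lem:chain-for-gap} to $\rho^{\otimes n}$ and use additivity of the multipartite information on tensor-power states to write $n$ times the single-copy gap $\Delta$ as a sum of $l$ conditional mutual informations, the $i$-th of which is evaluated on $\rho^{\otimes n}$ and has conditioning system $(A_i')^{\otimes n}$. The crucial structural feature is that because the conditioning system in each summand is a single party's $A'$-register, the recovery map supplied by the regularized BHOS14 bound for the $i$-th summand acts only on $(A_i')^{\otimes n}$ and outputs on $A_i^{\otimes n}(A_i')^{\otimes n}$. Since the $l$ recoveries act on disjoint parties, their tensor product $\mathcal{R}^{1,(n)}\otimes\cdots\otimes\mathcal{R}^{l,(n)}$ is well-defined and, applied to $\rho^{\otimes n}_{A_1'\cdots A_l'}$, produces exactly the target state on the right-hand side of \eqref{eq:multipartite-recovery-regularized}.

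The remaining task is to assemble the $l$ individual regularized relative-entropy bounds into a single joint bound against this product recovery. I would attempt a telescoping/data-processing argument: define $\sigma^{(0)} = \rho^{\otimes n}_{A_1 A_1'\cdots A_l A_l'}$ and $\sigma^{(k)} = (\mathcal{R}^{k,(n)}\circ\text{Tr}_{A_k^{\otimes n}})(\sigma^{(k-1)})$ for $k=1,\ldots,l$. Because the $\mathcal{R}^{i,(n)}$ and the partial traces on different parties commute, $\sigma^{(l)}$ equals $(\mathcal{R}^{1,(n)}\otimes\cdots\otimes\mathcal{R}^{l,(n)})(\rho^{\otimes n}_{A_1'\cdots A_l'})$. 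Using monotonicity of relative entropy together with the BHOS14 bound for the $k$-th CMI to control each increment $D(\rho^{\otimes n}\|\sigma^{(k)}) - D(\rho^{\otimes n}\|\sigma^{(k-1)})$, one would then obtain $D(\rho^{\otimes n}\|\sigma^{(l)}) \leq n\Delta + o(n)$, and dividing by $n$ and passing to the limit would yield the claim. The main obstacle I anticipate is precisely this telescoping step, since quantum relative entropy lacks an exact chain rule analogous to the classical case, and each increment must be tightly matched to the appropriate CMI. If the telescoping incurs irreducible losses, I would instead adapt the pinching / R\'enyi-divergence argument of \cite{BHOS14} directly to the multipartite setting, exploiting that the product structure of the local recoveries aligns naturally with the tensor-power structure of $\rho^{\otimes n}$.
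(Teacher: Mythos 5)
There is a genuine gap, and it sits exactly where you suspect: the telescoping step. Your plan is to invoke the regularized single-CMI bound of \cite{BHOS14} once per term in the chain decomposition of Lemma~\ref{lem:chain-for-gap} and then splice the $l$ resulting relative-entropy bounds into one bound against the product recovery. But the $k$-th invocation of that black box controls $D\bigl(\rho^{\otimes n}\,\Vert\,\mathcal{R}^{k,(n)}(\operatorname{Tr}_{A_k^n}\rho^{\otimes n})\bigr)$, i.e., the recovery applied to the \emph{exact} marginal of the tensor-power state, whereas your increment $D(\rho^{\otimes n}\Vert\sigma^{(k)})-D(\rho^{\otimes n}\Vert\sigma^{(k-1)})$ requires controlling the recovery applied to the already-perturbed state $\sigma^{(k-1)}$, which is not a tensor power and is not the state for which $\mathcal{R}^{k,(n)}$ was chosen. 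Monotonicity of relative entropy runs in the wrong direction for this (it bounds $D(\rho\Vert\sigma^{(k)})$ \emph{below} by nothing useful and above only by $D$ evaluated on pre-images), and there is no quantum chain rule that matches each increment to the corresponding conditional mutual information. So the assembly step is not merely delicate; as described it has no justification, and summing $l$ separate single-system recovery bounds simply does not produce a single relative entropy against $\mathcal{R}^{1,(n)}\otimes\cdots\otimes\mathcal{R}^{l,(n)}$.

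The paper avoids this combination problem entirely by never decomposing at the level of relative entropies. Instead it runs the $l$-round partial state distribution protocol of Section~\ref{sec:PSD} on $\rho^{\otimes n}$: the total quantum communication of the composite protocol is $\tfrac{n}{2}$ times the multipartite information gap (this is where Lemma~\ref{lem:chain-for-gap} enters, via (\ref{eq:expand-three}) and (\ref{eq:register-size})), the encodings are applied sequentially but the decodings act on disjoint receivers and hence commute into a tensor product, and then the \emph{single} BHOS14-style argument is applied once to the whole protocol: the operator inequality $\sigma_{CD}\leq[\dim(D)]^{2}\tau_{C}\otimes\rho_{D}$ applied to each communicated register, depolarizing-perturbed decoders $\mathcal{\tilde{D}}_{n}^{i}$, operator monotonicity of the logarithm, and the Audenaert--Eisert continuity bound to kill the residual term. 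The product recovery map falls out structurally, as the tensor product of the perturbed decoders preceded by tensoring in maximally mixed ancillas, rather than having to be assembled from $l$ separate existential statements. Your closing fallback (``adapt the argument of \cite{BHOS14} directly to the multipartite setting'') points in the right direction, but the essential missing idea is that the adaptation goes through the operational protocol, whose commuting decoders are what make the product recovery appear in a single relative entropy.
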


\begin{proof}
The proof of this lemma is very similar to the proof of Proposition~3 of
\cite{BHOS14}, except that we invoke the partial state redistribution protocol
reviewed in Section~\ref{sec:PSD}. Picking up from the notation there, and
specializing to a state on systems $A_{1}A_{1}^{\prime}A_{2}A_{2}^{\prime
}A_{3}A_{3}^{\prime}$, let%
\begin{multline}
\varphi_{R^{n}G_{n}^{1}G_{n}^{2}G_{n}^{3}Y_{n}^{1}Y_{n}^{2}Y_{n}^{3}%
A_{1}^{\prime n}A_{2}^{\prime n}A_{3}^{\prime n}}\\
\equiv\mathcal{E}_{R^{n}A_{3}^{n}X_{n}^{3}\rightarrow R^{n}G_{n}^{3}}%
^{3}\left(  \mathcal{E}_{R^{n}A_{2}^{n}A_{3}^{n}X_{n}^{2}\rightarrow
R^{n}A_{3}^{n}G_{n}^{2}}^{2}\left(  \mathcal{E}_{R^{n}A_{1}^{n}A_{2}^{n}%
A_{3}^{n}X_{n}^{1}\rightarrow R^{n}A_{2}^{n}A_{3}^{n}G_{n}^{1}}^{1}\left(
\phi_{A_{1}A_{1}^{\prime}A_{2}A_{2}^{\prime}A_{3}A_{3}^{\prime}R}^{\otimes
n}\otimes\bigotimes\limits_{i=1}^{3}\Phi_{X_{n}^{i}Y_{n}^{i}}\right)  \right)
\right)
\end{multline}
denote the state after the encodings. Tracing over $R^{n}$ and applying the
operator inequality $\sigma_{CD}\leq\left[  \dim\left(  D\right)  \right]
^{2}\tau_{C}\otimes\rho_{D}$ three times, we find that%
\begin{multline}
\varphi_{G_{n}^{1}G_{n}^{2}G_{n}^{3}Y_{n}^{1}Y_{n}^{2}Y_{n}^{3}A_{1}^{\prime
n}A_{2}^{\prime n}A_{3}^{\prime n}}\\
\leq\left[  \dim\left(  G_{n}^{1}\right)  \right]  ^{2}\left[  \dim\left(
G_{n}^{2}\right)  \right]  ^{2}\left[  \dim\left(  G_{n}^{3}\right)  \right]
^{2}\tau_{G_{n}^{1}}\otimes\tau_{G_{n}^{2}}\otimes\tau_{G_{n}^{3}}\otimes
\tau_{Y_{n}^{1}}\otimes\tau_{Y_{n}^{2}}\otimes\tau_{Y_{n}^{3}}\otimes
\rho_{A_{1}^{\prime n}A_{2}^{\prime n}A_{3}^{\prime n}}.
\end{multline}
Now for $i\in\left\{  1,2,3\right\}  $, define the perturbed decoding
operations%
\begin{equation}
\mathcal{\tilde{D}}_{n}^{i}\equiv\left(  1-2^{-n}\right)  \mathcal{D}_{n}%
^{i}+2^{-n}\Lambda_{\text{dep}},
\end{equation}
where $\Lambda_{\text{dep}}$ is the completely depolarizing channel. Since
these are completely positive and acting on different spaces, we find that%
\begin{multline}
\left(  \mathcal{\tilde{D}}_{n}^{3}\otimes\mathcal{\tilde{D}}_{n}^{2}%
\otimes\mathcal{\tilde{D}}_{n}^{1}\right)  \left(  \varphi_{G_{n}^{1}G_{n}%
^{2}G_{n}^{3}Y_{n}^{1}Y_{n}^{2}Y_{n}^{3}A_{1}^{\prime n}A_{2}^{\prime n}%
A_{3}^{\prime n}}\right) \\
\leq\left[  \dim\left(  G_{n}^{1}\right)  \right]  ^{2}\left[  \dim\left(
G_{n}^{2}\right)  \right]  ^{2}\left[  \dim\left(  G_{n}^{3}\right)  \right]
^{2}\left(  \mathcal{R}_{A_{1}^{\prime n}\rightarrow A_{1}^{n}A_{1}^{\prime
n}}^{1}\otimes\mathcal{R}_{A_{2}^{\prime n}\rightarrow A_{2}^{n}A_{2}^{\prime
n}}^{2}\otimes\mathcal{R}_{A_{3}^{\prime n}\rightarrow A_{3}^{n}A_{3}^{\prime
n}}^{3}\right)  \left(  \rho_{A_{1}^{\prime n}A_{2}^{\prime n}A_{3}^{\prime
n}}\right)  ,
\end{multline}
where the recovery map $\mathcal{R}_{A_{i}^{\prime n}\rightarrow A_{i}%
^{n}A_{i}^{\prime n}}^{i}$ for $i\in\left\{  1,2,3\right\}  $ is defined to be
the map that first tensors in maximally mixed states on systems $G_{n}^{i}$
and $Y_{n}^{i}$ and then performs $\mathcal{\tilde{D}}_{n}^{i}$. Using
operator monotonicity of the logarithm, we find that%
\begin{multline}
D\left(  \rho_{A_{1}A_{1}^{\prime}A_{2}A_{2}^{\prime}A_{3}A_{3}^{\prime}%
}^{\otimes n}\Vert\left(  \mathcal{R}_{A_{1}^{\prime n}\rightarrow A_{1}%
^{n}A_{1}^{\prime n}}^{1}\otimes\mathcal{R}_{A_{2}^{\prime n}\rightarrow
A_{2}^{n}A_{2}^{\prime n}}^{2}\otimes\mathcal{R}_{A_{3}^{\prime n}\rightarrow
A_{3}^{n}A_{3}^{\prime n}}^{3}\right)  \left(  \rho_{A_{1}^{\prime n}%
A_{2}^{\prime n}A_{3}^{\prime n}}\right)  \right) \\
\leq D\left(  \rho_{A_{1}A_{1}^{\prime}A_{2}A_{2}^{\prime}A_{3}A_{3}^{\prime}%
}^{\otimes n}\Vert\left(  \mathcal{\tilde{D}}_{n}^{3}\otimes\mathcal{\tilde
{D}}_{n}^{2}\otimes\mathcal{\tilde{D}}_{n}^{1}\right)  \left(  \varphi
_{G_{n}^{1}G_{n}^{2}G_{n}^{3}Y_{n}^{1}Y_{n}^{2}Y_{n}^{3}A_{1}^{\prime n}%
A_{2}^{\prime n}A_{3}^{\prime n}}\right)  \right) \\
+2\log\left(  \dim\left(  G_{n}^{1}\right)  \dim\left(  G_{n}^{2}\right)
\dim\left(  G_{n}^{3}\right)  \right)
\end{multline}
Theorem~3 of \cite{AE05} gives that%
\begin{equation}
\lim_{n\rightarrow\infty}\frac{1}{n}D\left(  \rho_{A_{1}A_{1}^{\prime}%
A_{2}A_{2}^{\prime}A_{3}A_{3}^{\prime}}^{\otimes n}\Vert\left(
\mathcal{\tilde{D}}_{n}^{3}\otimes\mathcal{\tilde{D}}_{n}^{2}\otimes
\mathcal{\tilde{D}}_{n}^{1}\right)  \left(  \varphi_{G_{n}^{1}G_{n}^{2}%
G_{n}^{3}Y_{n}^{1}Y_{n}^{2}Y_{n}^{3}A_{1}^{\prime n}A_{2}^{\prime n}%
A_{3}^{\prime n}}\right)  \right)  =0
\end{equation}
as a consequence of (\ref{eq:state-redist-concat}). With this, we can conclude
the statement in (\ref{eq:multipartite-recovery-regularized}) by combining the
above with (\ref{eq:register-size})\ and (\ref{eq:expand-three}).

It should be clear from here how the general multiparty case proceeds. Letting
the number of parties be some positive integer $l$, we first apply
Lemma~\ref{lem:chain-for-gap}. Next we perform the partial state distribution
protocol in the same fashion as above. Importantly, all of the encodings take
place in a particular order, but the decodings all act on different spaces and
thus commute. Finally, we apply the same reasoning at the end to conclude the
general statement of the lemma.
\end{proof}

\bigskip

\textit{We leave as an open question whether the following inequality holds:}%
\begin{multline}
I\left(  A_{1}A_{1}^{\prime}:\cdots:A_{l}A_{l}^{\prime}\right)  _{\rho
}-I\left(  A_{1}^{\prime}:\cdots:A_{l}^{\prime}\right)  _{\rho}%
\label{eq:extend-FR}\\
\geq-\log F\left(  \rho_{A_{1}A_{1}^{\prime}\cdots A_{l}A_{l}^{\prime}%
},\left(  \mathcal{R}_{A_{1}^{\prime}\rightarrow A_{1}A_{1}^{\prime}}%
^{1}\otimes\cdots\otimes\mathcal{R}_{A_{l}^{\prime}\rightarrow A_{l}%
A_{l}^{\prime}}^{l}\right)  \left(  \rho_{A_{1}^{\prime}\cdots A_{l}^{\prime}%
}\right)  \right)  ,
\end{multline}
\textit{where }$\rho_{A_{1}A_{1}^{\prime}\cdots A_{l}A_{l}^{\prime}}$\textit{
is a multipartite quantum state and }$\mathcal{R}_{A_{1}^{\prime}\rightarrow
A_{1}A_{1}^{\prime}}^{1}$\textit{, \ldots, }$\mathcal{R}_{A_{l}^{\prime
}\rightarrow A_{l}A_{l}^{\prime}}^{l}$\textit{ are some local recovery maps.
}At the very least, the inequality holds for classical systems as a
consequence of Theorem~5 of \cite{LW14}. By extending the methods of
\cite{FR14,BHOS14}, it might be possible to establish the above inequality.

\section{Discussion}

\label{sec:conclusion}We have demonstrated how the inequality in (\ref{eq:FR})
implies a relation between the multipartite information gap $I\left(
A_{1}A_{1}^{\prime}:\cdots:A_{l}A_{l}^{\prime}\right)  -I\left(  A_{1}%
^{\prime}:\cdots:A_{l}^{\prime}\right)  $ and local recoverability. Namely, a
multipartite state has a multipartite information gap nearly equal to zero if
and only if the systems $A_{1}$, \ldots, $A_{l}$ are locally recoverable from
the respective systems $A_{1}^{\prime}$, \ldots, $A_{l}^{\prime}$. This result
in turn implies that 1)\ the multipartite symmetric quantum discord of a state
$\rho_{A_{1}\cdots A_{l}}$ is nearly equal to zero if and only if the state is
locally recoverable after measurements\ occur on each of the systems and
2)\ the conditional entanglement of multipartite information is faithful. We
have also given a compelling operational interpretation of the multipartite
symmetric quantum discord as the twice the total quantum communication cost
needed to restore the coherence lost from a sequence of local measurements. A
similar operational interpretation applies to the original quantum discord
quantity as well. Finally, Proposition~\ref{prop:new-lower} gives another
lower bound on the multipartite information gap by generalizing an approach
recently outlined in \cite{BHOS14}.

There are several open questions to consider going forward from here. First,
it would be interesting if the inequality in (\ref{eq:extend-FR}) were true.
It is true for classical systems, and to show it for quantum systems, one
could consider extending the methods given in \cite[Proposition~4]{BHOS14} to
this multipartite setting. Next, in light of the recent developments in
\cite{BSW14,SBW14,SW14}, one could define a geometric CEMI as follows:%
\begin{multline}
E_{I}^{F}\left(  A_{1}:\cdots:A_{l}\right)  _{\rho}\equiv\\
-\frac{1}{2}\log\sup_{\substack{\rho_{A_{1}A_{1}^{\prime}\cdots A_{l}%
A_{l}^{\prime}},\\\mathcal{R}^{1},\cdots,\mathcal{R}^{l}}}F\left(  \rho
_{A_{1}A_{1}^{\prime}\cdots A_{l}A_{l}^{\prime}},\left(  \mathcal{R}%
_{A_{1}^{\prime}\rightarrow A_{1}A_{1}^{\prime}}^{1}\otimes\cdots
\otimes\mathcal{R}_{A_{l}^{\prime}\rightarrow A_{l}A_{l}^{\prime}}^{l}\right)
\left(  \rho_{A_{1}^{\prime}\cdots A_{l}^{\prime}}\right)  \right)
\end{multline}
where the optimization is over all extensions of $\rho_{A_{1}\cdots A_{l}}$
and all recovery maps $\mathcal{R}^{1},\cdots,\mathcal{R}^{l}$. One could also
define a multipartite surprisal of measurement recoverability as%
\begin{equation}
D^{F}\left(  \overline{A_{1}}:\cdots:\overline{A_{l}}\right)  _{\rho}%
\equiv-\log\sup_{\mathcal{E}_{A_{1}}^{1},\cdots,\mathcal{E}_{A_{l}}^{l}%
}F\left(  \rho_{A_{1}\cdots A_{l}},\left(  \mathcal{E}_{A_{1}}^{1}%
\otimes\cdots\otimes\mathcal{E}_{A_{l}}^{l}\right)  \left(  \rho_{A_{1}\cdots
A_{l}}\right)  \right)  ,
\end{equation}
where the optimization is over all local entanglement breaking channels. One
could even consider other discord-like quantities of the above form, but
involving alternate (pseudo-)distance measures such as the trace distance and
relative entropy. We can already conclude that the geometric CEMI\ is faithful
by the results given in this paper, and one could pursue further properties of
these quantities in future work.

\bigskip

\textbf{Acknowledgements.} The author is thankful to Mario Berta, Marco Piani,
Kaushik Seshadreesan, and Andreas Winter for discussions related to the topic
of this paper. The author acknowledges support from startup funds from the
Department of Physics and Astronomy at LSU, the NSF\ under Award
No.~CCF-1350397, and the DARPA Quiness Program through US Army Research Office
award W31P4Q-12-1-0019.

\appendix

\section{Multipartite de Finetti theorem}

We begin by recalling Theorem~II.7' of \cite{CKMR07}:

\begin{theorem}
\label{thm:de-finetti}Let $\zeta_{EF}$ be a $k$-extendible state, in the sense
that there is a state $\theta_{EF_{1}\cdots F_{k}}$ that is invariant with
respect to permutations of the $F$ systems and such that
$\operatorname{Tr}_{F_{2}\cdots F_{k}}\left\{  \theta_{EF_{1}\cdots F_{k}%
}\right\}  =\zeta_{EF}$. Then there exists a measure $d\mu\left(  \sigma
_{F}\right)  $ on states $\sigma_{F}$ on the $F$ system and a family of states
$\left\{  \xi_{E}^{\sigma}\right\}  $ parametrized by $\sigma_{F}$ such that%
\begin{equation}
\left\Vert \zeta_{EF}-\int d\mu\left(  \sigma_{F}\right)  \xi_{E}^{\sigma
}\otimes\sigma_{F}\right\Vert _{1}\leq\frac{2\left\vert F\right\vert^2 }{k}.
\end{equation}

\end{theorem}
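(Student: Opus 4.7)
The plan is to follow the informationally complete POVM strategy originating in the work of Christandl, K\"onig, Mitchison and Renner. First, let $\theta_{EF_1\cdots F_k}$ denote the permutation-symmetric extension of $\zeta_{EF}$ supplied by the hypothesis, so that tracing over any $k-1$ of the $F$ systems returns $\zeta_{EF}$. The idea is to extract an approximate de Finetti decomposition of $\zeta_{EF}$ by making a carefully chosen measurement on the $k-1$ ``extra'' copies $F_2,\ldots,F_k$ and then pairing the post-measurement state on $EF_1$ with the classical measurement record.

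Next, I would fix an informationally complete POVM $\{M_F^y\}_y$ on system $F$ with on the order of $|F|^2$ outcomes (for instance, one built from a tight frame $\{|\phi_y\rangle\}$ via $M_F^y = \tfrac{1}{|F|}|\phi_y\rangle\langle\phi_y|$), and apply the product POVM $\bigotimes_{j=2}^{k} M_{F_j}^{y_j}$ to the extension systems of $\theta$. This yields a classical outcome string $y^{k-1} = (y_2,\ldots,y_k)$ together with a conditional post-measurement state on $EF_1$. Because $\theta$ is permutation-symmetric across the $F$ systems, the distribution of $y^{k-1}$ is exchangeable, and for each typical outcome string one can form an empirical ``single-site'' estimate $\sigma_F(y^{k-1})$ by inverting the frame operator of the POVM against the empirical outcome frequencies.

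The technical heart of the argument would be to show that, averaged over $y^{k-1}$, the conditional state on $F_1$ is within trace distance $O(|F|^2/k)$ of the estimate $\sigma_F(y^{k-1})$, while the conditional state on $E$ is some state $\xi_E^{y^{k-1}}$ that effectively factorizes from $F_1$ at this accuracy. Two ingredients come together here: a Chebyshev-type concentration bound showing that the empirical outcome frequencies deviate from the true (exchangeable) conditional probabilities at rate $1/k$, and the dimensional cost $|F|^2$ of reconstructing a state on $F$ from those frequencies using an informationally complete POVM. The measure $d\mu(\sigma_F)$ is then defined as the pushforward of the distribution of $y^{k-1}$ under the map $y^{k-1} \mapsto \sigma_F(y^{k-1})$, and $\xi_E^\sigma$ is the corresponding conditional state on $E$.

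The main obstacle I anticipate is turning the classical concentration bound into the precise prefactor $2|F|^2$ in the trace-distance inequality, since one must carefully control the operator norm of the frame inverse and then combine this with monotonicity of the trace distance under the classical-to-quantum channel that builds $\sigma_F$ from the record. Once this quantitative estimate is in hand, the triangle inequality together with monotonicity of trace distance under the partial trace over the classical outcomes gives the stated bound $\|\zeta_{EF} - \int d\mu(\sigma_F)\, \xi_E^\sigma \otimes \sigma_F\|_1 \leq 2|F|^2/k$.
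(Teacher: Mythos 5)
First, a point of orientation: the paper does not prove this theorem at all. It is Theorem~II.7$'$ of Christandl, K\"onig, Mitchison and Renner, quoted verbatim as an external ingredient. The proof there runs through the symmetric subspace: one purifies $\theta_{EF_{1}\cdots F_{k}}$ so that the purification is supported on $\mathcal{H}_{E\hat{E}}\otimes\mathrm{Sym}^{k}\left(  \mathcal{H}_{F}\otimes\mathcal{H}_{\hat{F}}\right)  $, measures $F_{2}\hat{F}_{2},\ldots,F_{k}\hat{F}_{k}$ with the coherent-state POVM proportional to $\left\vert \phi\right\rangle \left\langle \phi\right\vert ^{\otimes\left(  k-1\right)  }d\phi$, and uses the fact that conditioning on outcome $\phi$ collapses the remaining copy $F_{1}\hat{F}_{1}$ onto $\left\vert \phi\right\rangle $ up to error $2\dim\left(  \mathcal{H}_{F}\otimes\mathcal{H}_{\hat{F}}\right)  /k=2\left\vert F\right\vert ^{2}/k$; tracing out the hatted systems then yields the product form $\xi_{E}^{\sigma}\otimes\sigma_{F}$ with $\sigma_{F}=\mathrm{Tr}_{\hat{F}}\left\vert \phi\right\rangle \left\langle \phi\right\vert $. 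Your informationally complete POVM route is genuinely different, and in the form you sketch it has a real gap.

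The gap is the decoupling of $E$ from $F_{1}$. You measure only $F_{2},\ldots,F_{k}$ and then invoke a classical concentration bound on the outcome string $y^{k-1}$; but a statement about the empirical statistics of $y^{k-1}$ controls, at best, the classical marginal of $F_{1}$ (or of $EF_{1}$ after it too is measured). It says nothing about the quantum coherences between $E$ and $F_{1}$ in the conditional state, which is precisely what must be eliminated to reach the product form $\xi_{E}^{\sigma}\otimes\sigma_{F}$. The coherent-state argument achieves this only because the purified state lives in a symmetric subspace, so that a sharp outcome on $k-1$ copies pins down the remaining \emph{quantum} copy and not merely its measurement statistics; nothing in your sketch plays that role. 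Two further quantitative problems: a Chebyshev-type bound gives fluctuations of order $1/\sqrt{k}$, not $1/k$ (the $1/k$ rate in the classical finite de Finetti theorem comes from the Diaconis--Freedman sampling argument, not from variance bounds), and inverting the frame operator of an informationally complete POVM costs an additional dimension-dependent factor in trace norm (typically at least of order $\left\vert F\right\vert ^{3/2}$), which would destroy the stated constant $2\left\vert F\right\vert ^{2}$. A measurement-based proof of this flavor can be made to work, but one should expect it to land in a weaker (one-way LOCC) norm or with worse dimension dependence; to obtain the theorem as stated, follow the symmetric-subspace purification argument.
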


The following proposition follows directly from prior results in the literature,
but we state it here and give a brief
proof for readers' convenience:

\begin{proposition}
\label{prop:multi-de-finetti}Let $\rho_{A_{1}A_{2}\cdots A_{l}}$ be a
multipartite $k$-extendible state, i.e., there exists a state%
\begin{equation}
\omega_{A_{1}A_{2,1}\cdots A_{2,k}\cdots A_{l,1}\cdots A_{l,k}}%
\end{equation}
that is permutation invariant with respect to the systems $A_{j,1}\cdots
A_{j,k}$, for each $j\in\left\{  2,\ldots,l\right\}  $, and such that
$\rho_{A_{1}A_{2}\cdots A_{l}}=\operatorname{Tr}_{A_{2,2}\cdots A_{2,k}\cdots
A_{l,2}\cdots A_{l,k}}\{\omega\}$. Then%
\begin{equation}
\left\Vert \rho_{A_{1}A_{2}\cdots A_{l}}-\operatorname{SEP}\left(  A_{1}%
:A_{2}:\cdots:A_{l}\right)  \right\Vert _{1}\leq\frac{2}{k}\left(  \left\vert
A_{2}\right\vert ^{2}+\cdots+\left\vert A_{l}\right\vert ^{2}\right)
.\label{eq:multipart-de-finetti}%
\end{equation}

\end{proposition}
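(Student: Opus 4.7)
The plan is to reduce the multipartite claim to $l-1$ applications of the bipartite de Finetti theorem (Theorem~\ref{thm:de-finetti}), peeling off one party at a time from the right. View $\rho_{A_1\cdots A_l}$ first as bipartite across the cut $(A_1 A_2\cdots A_{l-1})\,|\,A_l$. The hypothesized extension $\omega$, in particular its marginal on $A_1 A_{2,1}\cdots A_{l-1,1} A_{l,1}\cdots A_{l,k}$, certifies $k$-extendibility across this cut, so Theorem~\ref{thm:de-finetti} gives a decomposition
\begin{equation}
\left\Vert \rho_{A_1\cdots A_l} - \int d\mu_l(\sigma_{A_l})\, \xi^{\sigma_{A_l}}_{A_1\cdots A_{l-1}} \otimes \sigma_{A_l} \right\Vert_1 \leq \frac{2|A_l|^2}{k}.
\end{equation}

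Next I would argue that every conditional state $\xi^{\sigma_{A_l}}_{A_1\cdots A_{l-1}}$ is itself $k$-extendible across each of the remaining cuts $(A_1 A_2\cdots A_{j-1} A_{j+1}\cdots A_{l-1})\,|\,A_j$ for $j\in\{2,\ldots,l-1\}$. This is where the structure of the proof of Theorem~II.7' in \cite{CKMR07} is used: the family $\{\xi^{\sigma_{A_l}}\}$ is produced by applying an informationally complete POVM to the ``extra'' symmetric copies $A_{l,2}\cdots A_{l,k}$ of $\omega$, an operation that acts trivially on the $A_{j,1}\cdots A_{j,k}$ blocks for $j\neq l$ and therefore preserves their permutation invariance. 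Tracing out the remaining symmetric copies of $A_l$ leaves $\xi^{\sigma_{A_l}}$ with a permutation-invariant extension in each of the other parties.

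With this inheritance in hand, I would apply Theorem~\ref{thm:de-finetti} inside the integral to peel off $A_{l-1}$, then $A_{l-2}$, and so on down to $A_2$, invoking convexity of the trace norm and the triangle inequality after each step. The $j$-th peel contributes an error of at most $2|A_j|^2/k$, so the cumulative error after the $l-1$ steps is
\begin{equation}
\left\Vert \rho_{A_1\cdots A_l} - \int d\mu_l\cdots d\mu_2\, \eta^{\sigma_{A_2},\ldots,\sigma_{A_l}}_{A_1}\otimes \sigma_{A_2}\otimes\cdots\otimes\sigma_{A_l}\right\Vert_1 \leq \frac{2}{k}\sum_{j=2}^{l}|A_j|^2,
\end{equation}
and the approximating state on the right is manifestly a fully separable state across $A_1:A_2:\cdots:A_l$, which proves \eqref{eq:multipart-de-finetti}.

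The main obstacle is the inheritance step: verifying that, after peeling off $A_l,\ldots,A_{j+1}$, the residual family $\{\eta^{\sigma_{A_{j+1}},\ldots,\sigma_{A_l}}_{A_1\cdots A_j}\}$ still admits a $k$-fold permutation-invariant extension on $A_j$ so that Theorem~\ref{thm:de-finetti} can be reapplied. The cleanest way to handle this is to run all $l-1$ IC-POVM measurements simultaneously on the ``discarded'' symmetric copies $\{A_{j,2}\cdots A_{j,k}\}_{j=2}^{l}$ of $\omega$ and then analyze the conditional states, so that the extendibility in each block is used independently and no recursive loss of structure occurs; this packaging is essentially the iterative de Finetti argument already implicit in \cite{CKMR07}, which is why the proposition is stated as a direct consequence of prior results.
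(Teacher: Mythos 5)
Your overall strategy coincides with the paper's: iterate the bipartite de Finetti theorem (Theorem~\ref{thm:de-finetti}) to peel off one party at a time, accumulate an error of $2\vert A_j\vert^2/k$ per peel, and conclude by convexity and the triangle inequality that $\rho$ is $\tfrac{2}{k}\sum_{j=2}^{l}\vert A_j\vert^2$-close to a fully separable state. You also correctly identify the crux, namely that the conditional states produced by one application of the theorem must still be $k$-extendible in the remaining parties. Where you diverge from the paper is in how this inheritance is secured: you propose to open up the proof of Theorem~II.7$'$ of \cite{CKMR07} and use the fact that the family $\{\xi_E^{\sigma}\}$ arises from an informationally complete measurement on the discarded symmetric copies, an operation commuting with permutations of the other blocks. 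That would work, but as written it is not a proof from the \emph{statement} of Theorem~\ref{thm:de-finetti} (which only asserts existence of some measure and some family, with no structural guarantee), so you would need to restate and reprove the de Finetti theorem with the extra property you are using. The paper avoids this entirely by a black-box trick: it applies Theorem~\ref{thm:de-finetti} with $E$ taken to be the \emph{full} remaining extension (e.g.\ $E=AB_1\cdots B_k$, $F=C$), so the conditional states $\xi^{\sigma}_{AB_1\cdots B_k}$ live on the extended space, and then replaces them by their symmetrizations $\overline{\xi}^{\sigma}=\overline{\Pi}_{B^k}(\xi^{\sigma})$; since $\omega$ is already permutation invariant in the $B$ systems and the twirl is a channel, the approximation bound is preserved, and the symmetrized conditional states are manifestly their own $k$-extensions, ready for the next application of the theorem. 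Your packaging buys a conceptually direct ``measure everything at once'' picture, at the cost of depending on the internals of \cite{CKMR07}; the paper's twirling step buys a self-contained argument that uses only the quoted statement plus monotonicity of the trace norm.
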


\begin{proof}
The idea is to proceed similar to the proof of \cite[Theorem~1]{DPS05}, but
here invoking Theorem~\ref{thm:de-finetti} several times. We consider a
particular example with only three parties for simplicity, and it will then be
clear how the approach extends to states with more parties. So we begin with a
multipartite $k$-extendible state $\rho_{ABC}$ and its multipartite
$k$-extension $\omega_{AB_{1}\cdots B_{k}C_{1}\cdots C_{k}}$. We first apply
Theorem~\ref{thm:de-finetti} to $\omega_{AB_{1}\cdots B_{k}C}$ (where
$C=C_{1}$), setting $E=AB_{1}\cdots B_{k}$ and $F=C$. We can conclude that
there exists a measure $d\mu\left(  \sigma_{C}\right)  $ and a family of
states $\{\xi_{AB_{1}\cdots B_{k}}^{\sigma}\}$ such that%
\begin{equation}
\left\Vert \omega_{AB_{1}\cdots B_{k}C}-\int d\mu\left(  \sigma_{C}\right)
\xi_{AB_{1}\cdots B_{k}}^{\sigma}\otimes\sigma_{C}\right\Vert _{1}\leq
\frac{2\left\vert C\right\vert ^{2}}{k}.
\end{equation}
Due to the invariance of the state $\omega_{AB_{1}\cdots B_{k}C}$ under
permutations of the $B$ systems and monotonicity of the trace norm under
quantum operations, we can conclude the following inequality%
\begin{equation}
\left\Vert \omega_{AB_{1}\cdots B_{k}C}-\int d\mu\left(  \sigma_{C}\right)
\overline{\xi}_{AB_{1}\cdots B_{k}}^{\sigma}\otimes\sigma_{C}\right\Vert
_{1}\leq\frac{2\left\vert C\right\vert ^{2}}{k},\label{eq:de-finetti-1-iter}%
\end{equation}
where $\overline{\xi}_{AB_{1}\cdots B_{k}}^{\sigma}\equiv\overline{\Pi}%
_{B^{k}}\left(  \xi_{AB_{1}\cdots B_{k}}\right)  $, with $\overline{\Pi
}_{B^{k}}$ a channel that randomly permutes the $B$ systems (defined in
(\ref{eq:randomizing-channel})). Given that each state $\overline{\xi}%
_{AB_{1}\cdots B_{k}}^{\sigma}$ is permutation symmetric with respect to the
$B$ systems, we can again invoke Theorem~\ref{thm:de-finetti} to conclude that
there exists a measure $d\mu\left(  \tau\left(  \sigma\right)  \right)  $ on
states $\tau\left(  \sigma\right)  _{B}$ and a family of states $\{\chi
_{A}^{\tau\left(  \sigma\right)  }\}$ such that%
\begin{equation}
\left\Vert \overline{\xi}_{AB}^{\sigma}-\int d\mu\left(  \tau\left(
\sigma\right)  \right)  \chi_{A}^{\tau\left(  \sigma\right)  }\otimes
\tau\left(  \sigma\right)  _{B}\right\Vert _{1}\leq\frac{2\left\vert
B\right\vert ^{2}}{k}.
\end{equation}
This implies that%
\begin{equation}
\left\Vert \overline{\xi}_{AB}^{\sigma}\otimes\sigma_{C}-\int d\mu\left(
\tau\left(  \sigma\right)  \right)  \chi_{A}^{\tau\left(  \sigma\right)
}\otimes\tau\left(  \sigma\right)  _{B}\otimes\sigma_{C}\right\Vert _{1}%
\leq\frac{2\left\vert B\right\vert ^{2}}{k},
\end{equation}
and applying convexity of the trace norm gives%
\begin{equation}
\left\Vert \int d\mu\left(  \sigma_{C}\right)  \overline{\xi}_{AB}^{\sigma
}\otimes\sigma_{C}-\int\int d\mu\left(  \sigma_{C}\right)  d\mu\left(
\tau\left(  \sigma\right)  \right)  \chi_{A}^{\tau\left(  \sigma\right)
}\otimes\tau\left(  \sigma\right)  _{B}\otimes\sigma_{C}\right\Vert _{1}%
\leq\frac{2\left\vert B\right\vert ^{2}}{k}.\label{eq:triangle-1}%
\end{equation}
Applying monotonicity of the trace norm under partial trace to
(\ref{eq:de-finetti-1-iter}) gives%
\begin{equation}
\left\Vert \rho_{ABC}-\int d\mu\left(  \sigma_{C}\right)  \overline{\xi}%
_{AB}^{\sigma}\otimes\sigma_{C}\right\Vert _{1}\leq\frac{2\left\vert
C\right\vert ^{2}}{k}.\label{eq:triangle-2}%
\end{equation}
We finally combine (\ref{eq:triangle-1}) and (\ref{eq:triangle-2}) with the
triangle inequality to get%
\begin{equation}
\left\Vert \rho_{ABC}-\int\int d\mu\left(  \sigma_{C}\right)  d\mu\left(
\tau\left(  \sigma\right)  \right)  \chi_{A}^{\tau\left(  \sigma\right)
}\otimes\tau\left(  \sigma\right)  _{B}\otimes\sigma_{C}\right\Vert _{1}%
\leq\frac{2}{k}\left(  \left\vert B\right\vert ^{2}+\left\vert C\right\vert
^{2}\right)  .
\end{equation}
Since the state on the right is a convex combination of product states, it is
fully separable, so that we can conclude%
\begin{equation}
\left\Vert \rho_{ABC}-\operatorname{SEP}\left(  A:B:C\right)  \right\Vert
_{1}\leq\frac{2}{k}\left(  \left\vert B\right\vert ^{2}+\left\vert
C\right\vert ^{2}\right)  .
\end{equation}
Extending this proof to more parties is done in the obvious way, so that we can
conclude (\ref{eq:multipart-de-finetti}).
\end{proof}

\bibliographystyle{alpha}
\bibliography{Ref}

\newcommand{\etalchar}[1]{$^{#1}$}
\begin{thebibliography}{WDHW13}

\bibitem[AE05]{AE05}
Koenraad M.~R. Audenaert and Jens Eisert.
\newblock Continuity bounds on the quantum relative entropy.
\newblock {\em Journal of Mathematical Physics}, 46(10):102104, October 2005.
\newblock arXiv:quant-ph/0503218.

\bibitem[AF04]{AF04}
Robert Alicki and Mark Fannes.
\newblock Continuity of quantum conditional information.
\newblock {\em Journal of Physics A: Mathematical and General}, 37(5):L55,
  February 2004.
\newblock arXiv:quant-ph/0312081.

\bibitem[AHS08]{AHS08}
David Avis, Patrick Hayden, and Ivan Savov.
\newblock Distributed compression and multiparty squashed entanglement.
\newblock {\em Journal of Physics A: Mathematical and Theoretical},
  41(11):115301, March 2008.
\newblock arXiv:0707.2792.

\bibitem[Aud07]{A07}
Koenraad M.~R. Audenaert.
\newblock A sharp continuity estimate for the von {Neumann} entropy.
\newblock {\em Journal of Physics A: Mathematical and Theoretical},
  40(28):8127, July 2007.
\newblock arXiv:quant-ph/0610146.

\bibitem[BCY11]{BCY11}
Fernando~G.S.L. Brandao, Matthias Christandl, and Jon Yard.
\newblock Faithful squashed entanglement.
\newblock {\em Communications in Mathematical Physics}, 306(3):805--830,
  September 2011.
\newblock arXiv:1010.1750.

\bibitem[BHOS14]{BHOS14}
Fernando~G.S.L. Brandao, Aram~W. Harrow, Jonathan Oppenheim, and Sergii
  Strelchuk.
\newblock Quantum conditional mutual information, reconstructed states, and
  state redistribution.
\newblock November 2014.
\newblock arXiv:1411.4921.

\bibitem[BSW15]{BSW14}
Mario Berta, Kaushik Seshadreesan, and Mark~M. Wilde.
\newblock R\'enyi generalizations of the conditional quantum mutual
  information.
\newblock {\em Journal of Mathematical Physics}, 56(2):022205, February 2015.
\newblock arXiv:1403.6102.

\bibitem[BT15]{BT15}
Mario Berta and Marco Tomamichel.
\newblock The fidelity of recovery is multiplicative.
\newblock February 2015.
\newblock arXiv:1502.07973.

\bibitem[CAB{\etalchar{+}}11]{CABMPW11}
Daniel Cavalcanti, Leandro Aolita, Sergio Boixo, Kavan Modi, Marco Piani, and
  Andreas Winter.
\newblock Operational interpretations of quantum discord.
\newblock {\em Physical Review A}, 83(3):032324, March 2011.
\newblock arXiv:1008.3205.

\bibitem[CBR{\etalchar{+}}14]{CBRFA14}
Marco Cianciaruso, Thomas~R. Bromley, Wojciech Roga, Rosario~Lo Franco, and
  Gerardo Adesso.
\newblock Universality of the freezing of geometric quantum correlations.
\newblock November 2014.
\newblock arXiv:1411.2978.

\bibitem[CKMR07]{CKMR07}
Matthias Christandl, Robert Koenig, Graeme Mitchison, and Renato Renner.
\newblock One-and-a-half quantum de {Finetti} theorems.
\newblock {\em Communications in Mathematical Physics}, 273(2):473--498, July
  2007.
\newblock arXiv:quant-ph/0602130.

\bibitem[CL14]{CL14}
Eric~A. Carlen and Elliott~H. Lieb.
\newblock Remainder terms for some quantum entropy inequalities.
\newblock {\em Journal of Mathematical Physics}, 55(4):042201, April 2014.
\newblock arXiv:1402.3840.

\bibitem[CW04]{CW04}
Matthias Christandl and Andreas Winter.
\newblock ``{Squashed} entanglement'': An additive entanglement measure.
\newblock {\em Journal of Mathematical Physics}, 45(3):829--840, March 2004.
\newblock arXiv:quant-ph/0308088.

\bibitem[DPS05]{DPS05}
Andrew~C. Doherty, Pablo~A. Parrilo, and Federico~M. Spedalieri.
\newblock Detecting multipartite entanglement.
\newblock {\em Physical Review A}, 71(3):032333, March 2005.

\bibitem[DY08]{DY08}
Igor Devetak and Jon Yard.
\newblock Exact cost of redistributing multipartite quantum states.
\newblock {\em Physical Review Letters}, 100(23):230501, June 2008.

\bibitem[FR14]{FR14}
Omar Fawzi and Renato Renner.
\newblock Quantum conditional mutual information and approximate {Markov}
  chains.
\newblock October 2014.
\newblock arXiv:1410.0664.

\bibitem[Hor94]{H94}
Ryszard Horodecki.
\newblock Informationally coherent quantum systems.
\newblock {\em Physics Letters A}, 187(2):145--150, April 1994.

\bibitem[HSR03]{HSR03}
Michael Horodecki, Peter~W. Shor, and Mary~Beth Ruskai.
\newblock Entanglement breaking channels.
\newblock {\em Reviews in Mathematical Physics}, 15(6):629--641, August 2003.
\newblock arXiv:quant-ph/0302031.

\bibitem[Kim13]{K13conj}
Isaac~H. Kim.
\newblock Application of conditional independence to gapped quantum many-body
  systems.
\newblock {http://www.physics.usyd.edu.au/quantum/Coogee2013}, January 2013.
\newblock Slide 43.

\bibitem[LR73a]{PhysRevLett.30.434}
Elliott~H. Lieb and Mary~Beth Ruskai.
\newblock A fundamental property of quantum-mechanical entropy.
\newblock {\em Physical Review Letters}, 30(10):434--436, March 1973.

\bibitem[LR73b]{LR73}
Elliott~H. Lieb and Mary~Beth Ruskai.
\newblock Proof of the strong subadditivity of quantum-mechanical entropy.
\newblock {\em Journal of Mathematical Physics}, 14(12):1938--1941, December
  1973.

\bibitem[LW14a]{LW14a}
Ke~Li and Andreas Winter.
\newblock Relative entropy and squashed entanglement.
\newblock {\em Communications in Mathematical Physics}, 326(1):63--80, February
  2014.
\newblock arXiv:1210.3181.

\bibitem[LW14b]{LW14}
Ke~Li and Andreas Winter.
\newblock Squashed entanglement, $k$-extendibility, quantum {Markov} chains,
  and recovery maps.
\newblock October 2014.
\newblock arXiv:1410.4184.

\bibitem[McG54]{M54}
W.~McGill.
\newblock Multivariate information transmission.
\newblock {\em Transactions of the IRE Professional Group on Information
  Theory}, 4(4):93--111, September 1954.

\bibitem[MD11]{MD11}
Vaibhav Madhok and Animesh Datta.
\newblock Interpreting quantum discord through quantum state merging.
\newblock {\em Physical Review A}, 83(3):032323, March 2011.
\newblock arXiv:1008.4135.

\bibitem[Opp08]{O08}
Jonathan Oppenheim.
\newblock A paradigm for entanglement theory based on quantum communication.
\newblock January 2008.
\newblock arXiv:0801.0458.

\bibitem[OZ01]{OZ01}
Harold Ollivier and Wojciech~H. Zurek.
\newblock Quantum discord: A measure of the quantumness of correlations.
\newblock {\em Physical Review Letters}, 88(1):017901, December 2001.
\newblock arXiv:quant-ph/0105072.

\bibitem[PHH08]{PHH08}
Marco Piani, Pawel Horodecki, and Ryszard Horodecki.
\newblock No-local-broadcasting theorem for multipartite quantum correlations.
\newblock {\em Physical Review Letters}, 100(9):090502, March 2008.
\newblock arXiv:0707.0848.

\bibitem[Pia12]{P12}
Marco Piani.
\newblock Problem with geometric discord.
\newblock {\em Physical Review A}, 86(3):034101, September 2012.
\newblock arXiv:1206.0231.

\bibitem[SBW14]{SBW14}
Kaushik~P. Seshadreesan, Mario Berta, and Mark~M. Wilde.
\newblock R\'enyi squashed entanglement, discord, and relative entropy
  differences.
\newblock October 2014.
\newblock arXiv:1410.1443.

\bibitem[Sti55]{S55}
William~F. Stinespring.
\newblock Positive functions on {C*}-algebras.
\newblock {\em Proceedings of the American Mathematical Society},
  6(2):211--216, April 1955.

\bibitem[SW14]{SW14}
Kaushik~P. Seshadreesan and Mark~M. Wilde.
\newblock Fidelity of recovery, geometric squashed entanglement, and
  measurement recoverability.
\newblock October 2014.
\newblock arXiv:1410.1441.

\bibitem[Uhl76]{U73}
Armin Uhlmann.
\newblock The ``transition probability'' in the state space of a *-algebra.
\newblock {\em Reports on Mathematical Physics}, 9(2):273--279, 1976.

\bibitem[Uhl77]{U77}
Armin Uhlmann.
\newblock Relative entropy and the {Wigner-Yanase-Dyson-Lieb} concavity in an
  interpolation theory.
\newblock {\em Communications in Mathematical Physics}, 54(1):21--32, 1977.

\bibitem[Ume62]{U62}
Hisaharu Umegaki.
\newblock Conditional expectations in an operator algebra {IV} (entropy and
  information).
\newblock {\em Kodai Mathematical Seminar Reports}, 14(2):59--85, 1962.

\bibitem[Wat60]{W60}
Satosi Watanabe.
\newblock Information theoretical analysis of multivariate correlation.
\newblock {\em IBM Journal of Research and Development}, 4(1):66--82, January
  1960.

\bibitem[WDHW13]{WDHW13}
Mark~M. Wilde, Nilanjana Datta, Min-Hsiu Hsieh, and Andreas Winter.
\newblock Quantum rate-distortion coding with auxiliary resources.
\newblock {\em IEEE Transactions on Information Theory}, 59(10):6755--6773,
  October 2013.
\newblock arXiv:1212.5316.

\bibitem[Wer89a]{W89a}
Reinhard~F. Werner.
\newblock An application of {Bell's} inequalities to a quantum state extension
  problem.
\newblock {\em Letters in Mathematical Physics}, 17(4):359--363, May 1989.

\bibitem[Wer89b]{W89}
Reinhard~F. Werner.
\newblock Quantum states with {Einstein-Podolsky-Rosen} correlations admitting
  a hidden-variable model.
\newblock {\em Physical Review A}, 40(8):4277--4281, October 1989.

\bibitem[WL12]{Winterconj}
Andreas Winter and Ke~Li.
\newblock A stronger subadditivity relation?
\newblock
  http://www.maths.bris.ac.uk/$\sim$csajw/stronger$\_$subadditivity.pdf, 2012.

\bibitem[YD09]{YD09}
Jon Yard and Igor Devetak.
\newblock Optimal quantum source coding with quantum side information at the
  encoder and decoder.
\newblock {\em IEEE Transactions in Information Theory}, 55(11):5339--5351,
  November 2009.
\newblock arXiv:0706.2907.

\bibitem[YHH{\etalchar{+}}09]{YHHHOS09}
Dong Yang, Karol Horodecki, Michal Horodecki, Pawel Horodecki, Jonathan
  Oppenheim, and Wei Song.
\newblock Squashed entanglement for multipartite states and entanglement
  measures based on the mixed convex roof.
\newblock {\em IEEE Transactions on Information Theory}, 55(7):3375--3387, July
  2009.
\newblock arXiv:0704.2236.

\bibitem[YHW07]{YHW07}
Dong Yang, Michal Horodecki, and Z.~D. Wang.
\newblock Conditional entanglement.
\newblock August 2007.
\newblock arXiv:quant-ph/0701149.

\bibitem[YHW08]{YHW08}
Dong Yang, Michal Horodecki, and Z.~D. Wang.
\newblock An additive and operational entanglement measure: Conditional
  entanglement of mutual information.
\newblock {\em Physical Review Letters}, 101(14):140501, September 2008.
\newblock arXiv:0804.3683.

\bibitem[Zur00]{Z00}
Wojciech~H. Zurek.
\newblock Einselection and decoherence from an information theory perspective.
\newblock {\em Annalen der Physik}, 9(11-12):855--864, November 2000.
\newblock arXiv:quant-ph/0011039.

\bibitem[ZW14]{Z14}
Lin Zhang and Junde Wu.
\newblock A lower bound of quantum conditional mutual information.
\newblock {\em Journal of Physics A: Mathematical and Theoretical},
  47(41):415303, October 2014.
\newblock arXiv:1403.1424.

\end{thebibliography}

\end{document}